\tikzstyle{blank_rectangle}=[fill=white, draw=black, shape=rectangle]
\tikzstyle{new style 0}=[fill=white, draw=black, shape=circle]
\tikzstyle{copy dot}=[fill=black, draw=black, shape=circle]
\tikzstyle{new edge style 0}=[-, fill=white, draw={rgb,255: red,142; green,142; blue,142}]
\tikzstyle{light grey}=[-, draw={rgb,255: red,226; green,226; blue,226}]
\newtheorem{theorem}{Theorem}[section]
\newtheorem{remark}[theorem]{Remark}
\newtheorem{corollary}[theorem]{Corollary}
\theoremstyle{definition}
\newtheorem{definition}[theorem]{Definition}
\theoremstyle{definition}
\newtheorem{notation}[theorem]{Notation}
\theoremstyle{definition}
\newtheorem{example}[theorem]{Example}
\newtheorem{problem}[theorem]{Problem}
\newcommand{\toset}[1]{\mathsf{set}\!\left(#1\right)}     
\newcommand{\inputs}{\mathsf{in}}               
\newcommand{\outputs}{\mathsf{out}}             
\newcommand{\labelfun}{\mathsf{label}}          
\newcommand{\arity}{\mathsf{arity}}             
\newcommand{\ch}{\mathsf{ch}}                   
\newcommand{\listfun}{\mathsf{List}}            
\newcommand{\id}{\mathrm{id}}                   
\newcommand{\swap}{\mathsf{swap}}               
\newcommand{\copymap}{\mathsf{copy}}            
\newcommand{\delmap}{\mathsf{del}}              
\newcommand{\AND}{\mathtt{AND}}                 
\newcommand{\OR}{\mathtt{OR}}                   
\newcommand{\termgraphs}[1]{\mathsf{TermGraphs}_{#1}} 
\newcommand{\sem}[2]{\llbracket #1 \rrbracket_{#2}} 
\newcommand{\mlz}[1]{}
\begin{document}

\title{A Unified Compositional View \\ of Attack Tree Metrics}
\date{}
\author{Benedikt Peterseim and Milan Lopuha\"a-Zwakenberg, \\ \textit{University of Twente}}

\maketitle

\begin{abstract}
    Attack trees (ATs) are popular graphical models for reasoning about the security of complex systems, allowing for the quantification of risk through so-called AT metrics. 
    A large variety of different such AT metrics have been proposed, 
    and despite their wide-spread practical use, 
    no systematic treatment of attack tree metrics so far is fully satisfactory.
    Existing approaches either fail to include important metrics, 
    or they are too general to provide a useful systematic way for defining concrete AT metrics, giving only an abstract characterisation of their behaviour.
    We solve this problem by developing a compositional theory of ATs and their functorial semantics based on gs-monoidal categories. 
    Viewing attack trees as string diagrams, 
    we show that components of ATs form a channel category, a particular type of gs-monoidal category.
    AT metrics then correspond to functors of channel categories.
    This characterisation is both general enough to include all common AT metrics, 
    and concrete enough to define AT metrics by their logical structure.
\end{abstract}

\section{Introduction}\label{sec:introduction}

\noindent \textbf{Background: Attack trees.} 
Since their inception \cite{schneier1999attack}, attack trees (ATs) have been widely applied 
across numerous domains, including  
nuclear control systems \cite{khand2007attack},
smart grids \cite{beckers2014determining},
and railway control systems \cite{dong2017attack}. How they work is best illustrated by an example: 
\Cref{fig:example-attack-tree} shows a simple example of an AT.
Note that an AT is not necessarily a tree in the graph-theoretic sense: some nodes (such as $\mathtt{forge\ badge}$ in \Cref{fig:example-attack-tree}) may have multiple parents.
For a comprehensive overview of ATs in comparison to other graphical models for security modelling, see \cite[Section 3.1.1]{kordy2014dag}.
A number of extensions to the formalism of ATs 
have been proposed, 
such as dynamic ATs \cite{jhawar2015attack} and attack-defence trees \cite{kordy2010foundations}. 
While we focus on standard ATs for simplicity, 
our methods will be set up in a way 
that can easily be adapted to these extensions.

\begin{figure}
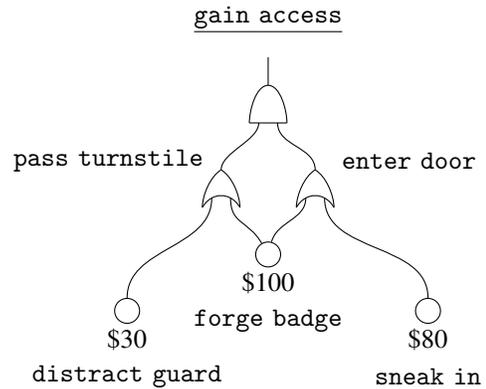

        \vspace{-0.4cm}
        \centering
        \tikzfig{example_attack_tree}
        \caption{
        \textit{Example attack tree:} 
        An attacker aims to gain unauthorised access 
        to an office space. 
        This requires passing a turnstile 
        and entering through a secured door 
        (top AND gate). 
        To pass the turnstile, 
        they can either forge an access badge 
        or distract the guard and jump over 
        (left OR gate). 
        To enter the door, 
        they can either also use a forged badge 
        or follow an employee in 
        (right OR gate). 
        Basic attack steps are labelled with their cost: 
        distracting the guard costs $\$30$, forging a badge $\$100$, and sneaking in $\$80$.
        } 
        \vspace{-0.5cm}
        \label{fig:example-attack-tree}
\end{figure}

\text{}\\
\noindent \textbf{Attack tree metrics.} 
In addition to a \emph{qualitative} model of how basic attack steps combine via AND and OR gates, 
ATs allow for the quantification of risk through AT metrics. 
For example, in \Cref{fig:example-attack-tree}, 
we use the \emph{min.~cost metric},
associating to each basic attack step the minimum cost required to successfully perform it.
The question of computing the \emph{metric value} is now: 
how do these attribute values propagate from the basic attack steps to the top node?
In our example: what is the minimum cost required to successfully gain access to the office space? 

Already in the example of the min.~cost metric, this question may be answered in different ways: 
depending on how one interprets the AT, 
the minimum cost of a successful attack may either be 
$\$110$ or $\$100$ for the AT in \Cref{fig:example-attack-tree}; see \Cref{sec:two-interpretations}.
In addition, many other types of metrics are used, 
including the minimum time or skill level required for a successful attack, 
or its maximal probability (under certain assumptions on attacker behaviour). 
With each of the various ways to quantify risk in an AT admitting several interpretations as an AT metric, the need for a systematic unified account of AT metrics emerges:

\begin{tcolorbox}
\begin{problem}\label{main-problem}
Provide a framework for attack tree metrics that is both:
\begin{enumerate}
    \item sufficiently \emph{general} to capture all common attack tree metrics used in practice, and
    \item sufficiently \emph{concrete} for attack tree metrics to be fully determined by how they behave on 
    the basic building blocks of attack trees: 
    logic gates and basic attack steps. 
\end{enumerate}
\end{problem}
\end{tcolorbox}
\noindent The second property ensures that a framework for AT metrics 
yields a systematic method for defining AT metrics, 
instead of just an abstract characterisation of their general behaviour.

\text{}\\
\noindent\textbf{Shortcomings of existing frameworks.}
Despite the widespread use of ATs,
so far, 
no systematic treatment of AT metrics exists that meets both of the requirements of \Cref{main-problem}.
    
Several mutually incompatible frameworks based on \emph{semirings} (see \Cref{def:semiring}) exist,
one based on what we will refer to as the \emph{bottom-up interpretation} \cite{mauw2005foundations}, 
one based on the \emph{propositional interpretation} \cite{lopuhaa2022efficient},
and another one based on yet a different interpretation \cite{bossuat2017evil}. 
All are concrete enough to provide a recipe for defining AT metrics as required 
by our second desideratum. 
However, they do not satisfy our first requirement, failing to capture those AT metrics 
that do not adhere to their chosen interpretation.
    
In the other direction, earlier work based on \emph{operads} \cite{lopuhaa2024attack}
provided a framework that does include essentially every important AT metric, 
instead failing to satisfy our second requirement of sufficient concreteness. 
In particular, specifying an AT metric as defined in this framework requires specifying its value on infinitely many ATs, 
which is clearly not satisfactory.

\text{}\\
\textbf{Approach.} We introduce a new framework for AT metrics that solves Problem \ref{main-problem}. The key idea is to consider building blocks of ATs with multiple inputs and outputs. 
We define a metric by the operations it associates to AND/OR gates, and how it handles the dependence between multiple outputs. 
This is general enough to capture all relevant metrics, 
and concrete enough to specify metrics from finite data.
We differ from semiring formalisms (which only describe the AND/OR operators and hardcode the dependency handling) 
and from the operad approach (which needs to specify infinitely many operators, so it cannot be used computationally in full generality).

Mathematically, our framework is based on \emph{gs-monoidal categories}, a category-theoretic concept originally proposed in the context of term rewriting \cite{gadducci1996algebraic}. Nevertheless, we will assume no prior knowledge of category theory, as we give a self-contained definition of a special type of gs-monoidal category we call \emph{channel categories} for their resemblance to information-theoretic channels. A channel category is defined in terms of abstract operations of parallel and sequential composition: how a metric handles these operations exactly describes how a metric handles dependencies within ATs. 
We show that attack trees give rise to a channel category, and that the same holds for most metrics from the literature (with only a few exceptions, discussed in \Cref{sec:non-examples}). 
Thus, we define an AT metric as a \emph{functor of channel categories}. 
Despite its generality, this definition is still sufficiently concrete to yield a method for computing the metric.
Interestingly, the different AT semantics that have been proposed in the literature \cite{mauw2005foundations,kordy2014dag,lopuhaa2022efficient} can also be understood as functors of channel categories; 
hence our approach unifies AT metrics and semantics. 

\text{}\\
\noindent\textbf{Contributions.} In summary, our main contributions are:
\begin{itemize}
\item A new, compositional framework for AT metrics;
\item A self-contained exposition of the underlying mathematical theory of channel categories;
\item Showing most existing metrics fit into this framework;
\item Unifying AT metrics and semantics.
\end{itemize}

\noindent Understanding the compositionality of ATs is also of independent interest,
beyond the problem of systematising AT metrics.

First, it provides a more formal view on how modelling with ATs is done \emph{compositionally},
by building up models of complex systems from simpler ones.
In the context of \emph{fault trees} (the counterpart of ATs in reliability engineering), 
\emph{component fault trees} already provide such formalism for compositional modelling \cite{kaiser2003new}.
Our formalism is more general, 
since fault trees can be viewed as a special case of ATs in our setting 
(where the ``attacker'' is nature).

Moreover, our treatment of ATs as \emph{term graphs} (see \Cref{sec:term-graphs}) presents them as particular kinds of \emph{string diagrams},
fitting into a wider picture.
String diagrams have broadly been applied for a compositional understanding 
of graphical languages across various domains: Bayesian networks \cite{fong2013causal,jacobs2018channel}, 
quantum circuits \cite{coecke2018picturing}, 
and tensor networks \cite{biamonte2011categorical}.
``String diagram'' is an umbrella term used for various related graphical calculi, 
and term graphs are the specific form of string diagram corresponding to the context of channel categories.

Finally, compositional semantics are an inherently desirable property of any formal language,
ensuring that the meaning of complex terms is systematically built up from the meanings of simpler terms.
Our results can then also be interpreted as providing \textbf{functorial (or ``categorical'') semantics} for ATs.

\section{Example: Two Min.~Cost Metrics}\label{sec:example}

\noindent In this section, 
we highlight the key aspects of our approach through a concrete example. 
We show how two seemingly incompatible interpretations of the min.~cost metric, the \emph{bottom-up} \cite{mauw2005foundations} and the \emph{propositional} \cite{lopuhaa2022efficient} interpretation,
can be defined in a unified way within our framework. 

\subsection{Two interpretations}\label{sec:two-interpretations} 

\textit{The bottom-up interpretation \cite{mauw2005foundations}.} In this interpretation, 
the AND and OR gates directly denote operations that combine metric values.
In the case of minimum cost, 
an AND gate is interpreted as taking the sum of the costs of its children,
since \emph{all} sub-goals must be achieved to successfully execute the attack associated to the AND gate.
OR gates are interpreted as taking the minimum of the costs of its children,
as only \emph{some} sub-goal needs to be achieved for the OR gate to be activated.
Hence, in the example AT from \Cref{fig:example-attack-tree}, the minimum cost of a successful attack 
under this bottom-up interpretation is: 
    $$ \min(\$30, \$100) + \min(\$100, \$80) = \$110. $$
Importantly, under the bottom-up interpretation, the basic attack step labelled 
$\mathtt{forge\ badge}$ needs to be executed separately in each sub-goal 
in which it appears. 
That is, the attacker cannot use the same access badge
for both passing the turnstile and for entering through the office door. 
If instead we would like to model the same forged access badge being used in both sub-goals,
we either need a different attack tree, or we may use the \emph{propositional} interpretation. 

\textit{The propositional interpretation \cite{lopuhaa2022efficient}.} 
This interpretation 
views an attack tree as a propositional formula, 
describing which sets of basic attack steps lead to achieving the top-level goal. 
Such set of basic attack steps is called 
a \emph{successful attack}.
The minimum cost is computed by considering all
\emph{minimal} successful attacks, 
those successful attacks that have no proper successful subsets.
For the example attack tree from \Cref{fig:example-attack-tree}, there are two minimal successful attacks: 
one where the attacker performs only the basic attack step labelled  $\mathtt{forge\ badge}$; 
and one attack where the steps labelled $\mathtt{distract\ guard}$ and $\mathtt{sneak\ in}$ (but not $\mathtt{forge\ badge}$) are performed. 
These minimal successful attacks have costs $\$100$ and $\$30+\$80$, respectively. 
Therefore, the minimum cost  is 
    $$ \min(\$100, \$30 + \$80) = \$100. $$
Since the attacker can use the same forged access badge for both passing the turnstile 
and entering through the office door,
the minimum cost of a successful attack is strictly lower than in the bottom-up interpretation.
In particular, the two interpretations are not equivalent. 

\subsection{A unified way to define both attack tree metrics}

The reason why these two interpretations of min.~cost yield different values lies in how they treat \emph{dependence}. 
To make this more mathematically precise, consider the ``multiple-root attack tree’’ resulting from removing the top-level AND-gate of the example attack tree in \Cref{fig:example-attack-tree}. 
The two top-level goals in this example are $\mathtt{pass\ turnstile}$ and $\mathtt{enter\ door}$: \mlz{Put BAS costs in this figure}
\begin{equation}\label{eq:multi-root-attack-tree}
    \tikzfig{multiple_root_attack_tree}
\end{equation}
In the bottom-up interpretation, evaluating a multiple-root attack tree is straightforward: we compute the cost of each output independently, giving output vector $\begin{pmatrix} 80 & 30 \end{pmatrix}^{\intercal}$ (reading outputs from left to right). Taking their sum, the metric value of the larger AT can be computed from these values alone.

In contrast, simply storing the outputs $\begin{pmatrix} 80 & 30 \end{pmatrix}^{\intercal}$ is not enough for the propositional interpretation, as it does not tell us that we can activate both top gates more cheaply by the single basic attack step labelled $\mathtt{forge\ badge}$. 
To account for this, our output is a four-dimensional vector, whose entries correspond to the minimum cost required to obtain top-level outputs $00$, $01$, $10$, $11$; the result is $\begin{pmatrix} 0 & 80 & 30 & 100 \end{pmatrix}^{\intercal}$. 
This is all the information later gates need to compute min.~costs according to the propositional interpretation.  
AND/OR gates act on such vectors as matrices over the $(\min,+)$-semiring. 
In fact, the top AND gate becomes the matrix $\bigl(\begin{smallmatrix} 0 & 0 & 0 & \infty \\ \infty & \infty & \infty & 0 \end{smallmatrix}\bigr)$, so the result is:
\begin{gather*}
\begin{pmatrix} 0 & 0 & 0 & \infty  \\ \infty & \infty & \infty & 0\end{pmatrix} \begin{pmatrix} 0 \\80\\30\\100 \end{pmatrix}\\
= \begin{pmatrix}\min(0+0,0+80,0+30,\infty+100)\\\min(\infty+0,\infty+80,\infty+30,0+100) \end{pmatrix} 
= \begin{pmatrix}0 \\ 100\end{pmatrix}.
\end{gather*}
The first coefficient is the minimum cost to get output 0 for the overall AT, and the second coefficient the minimal cost to get output 1; this is the $\$100$ we computed before.

This perspective allows us to formalise both interpretations as mappings of the form:
\[
\begin{aligned}
\sem{\,\cdot\,}{\mathsf{bottom}\text{-}\mathsf{up}}\colon\;\;\,\mathsf{AttackTrees} &\longrightarrow \mathsf{Functions}, \\
\sem{\,\cdot\,}{\mathsf{propositional}}\colon\mathsf{AttackTrees} &\longrightarrow \mathsf{Matrices},
\end{aligned}
\]
where \textsf{AttackTrees} denotes a certain kind of category of attack trees, and \textsf{Functions} and \textsf{Matrices} are categories of the same kind capturing the semantics of the bottom-up and propositional interpretations, respectively.

More precisely, these mappings are \emph{functors of channel categories}. Informally, a channel category consists of sets of ``channels’’ (such as functions or matrices), 
together with operations for sequential and parallel composition 
of these channels, and a number of special ``wiring’’ channels.

To understand attack trees in a compatible way, we need to pass to \emph{attack tree components}. 
Attack tree components are parts of an attack tree, cut out from $T$ along edges:
\begin{equation}\label{eq:attack-tree-component-intro}
    \tikzfig{attack_tree_component_intro}
\end{equation}
In contrast to a full attack tree, an attack tree component may have 
multiple inputs, denoted by small squares (``$\square$''), 
and outputs, represented by outgoing wires. 
For example, the attack tree component on the right hand side of \eqref{eq:attack-tree-component-intro}
has one input and two outputs; the multiple-root attack tree \eqref{eq:multi-root-attack-tree}
has two outputs and no input. \Cref{sec:term-graphs} will make attack tree components precise, by defining them as certain \emph{term graphs}.
Channel categories will be introduced in detail in \Cref{sec:channels}.

The perspective of attack tree metrics as functors of channel categories accounts not only for the two instantiations of the min.~cost metric discussed here, but also for a wide range of metrics and semantics found in the literature. This is the subject of \Cref{sec:compositionality-of-attack-tree-metrics}.

\section{Syntax: Term Graphs}\label{sec:term-graphs}

On a syntactic level, 
graphical risk models such as 
attack trees, fault trees and their numerous variants 
can all be understood as \emph{term graphs}. 
Term graphs 
can be viewed as the single-sorted (``untyped'') special case of \emph{gs-monoidal string diagrams}  \cite{corradini1999algebraic,fritz2023free,cho2019disintegration},  
but we will not assume any familiarity with string diagrams or category theory.
Alternatively, term graphs can be viewed as a generalisation of syntax trees, 
allowing for explicit sharing between subtrees.

Our main focus will be on particular term graphs: attack tree components.
Nevertheless, we work in the general setting of term graphs for two reasons.
First, it allows us to directly apply the results of \cite{corradini1999algebraic} to our case of interest.
Second, it provides a basis to easily generalise our results to more general variants of attack trees. 

Term graphs are always defined with respect to a certain \emph{signature}.

\begin{definition}[Signature]
    A (single-sorted, algebraic) \emph{signature} is a set $\Sigma$ 
    together with a function $\arity_\Sigma: \Sigma \to \mathbb{N}$.
    The elements of $\Sigma$ are called \emph{function symbols} and each $f\in \Sigma$ 
    is thought of as taking $\arity_\Sigma(f)$ many arguments.
\end{definition}

\begin{example}
    A typical example of a signature in the context of algebra is 
    the signature of groups,
    \begin{gather*}
    \mathsf{Gr} := \{m, i, 1\},\\
    \arity_{\mathsf{Gr}}(m) = 2, 
        \; \arity_{\mathsf{Gr}}(i) = 1, 
        \; \arity_{\mathsf{Gr}}(1) = 0.
    \end{gather*}
    Here, $m$ denotes the group operation, or multiplication (a binary operation);
    $i$ the inversion (a unary operation); 
    and $1$ the identity element (a nullary operation, or constant).
\end{example}

Our signature of interest is the following.

\begin{definition}\label{ex:attack-and-fault-tree-signatures}
    Let $B$ be a set, thought of as consisting of \emph{basic attack step labels}.
    The \emph{signature of attack trees over $B$} is 
    \begin{gather*}
        \mathsf{AT}(B) := \{\AND_i, \OR_i \mid i\in \mathbb{N}_{\geq 1}\} \sqcup B, \\
        \arity_{\mathsf{AT}(B)}(\AND_i) = \arity_{\mathsf{AT}(B)}(\OR_i) = i, \\
        \arity_{\mathsf{AT}(B)}(b) = 0,
    \end{gather*}
    for all $i\in \mathbb{N}_{\geq 1}, b\in B$. Here, $\sqcup$ denotes the disjoint union of sets.
\end{definition}

For each $i\in \mathbb{N}_{\geq 1}$, the function symbols $\AND_i$ and $\OR_i$ are thought of as representing an AND gate and an OR gate with $i$ inputs, respectively.

\begin{example}\label{ex:at-signature-example}
    The AT from \Cref{fig:example-attack-tree} is defined over the signature $\mathsf{AT}(B)$ 
    with basic attack steps labels
        $$ B := \{ \mathtt{distract\ guard}, \mathtt{forge\ badge}, \mathtt{sneak\ in} \}.$$
    We will abbreviate $B = \{\mathtt{D},\mathtt{F},\mathtt{S}\}$ from now on.
\end{example}
\mlz{Change W to S in figures.}

At this point, 
one might be tempted to expect that term graphs are simply graph representations 
of terms over the given signature. 
This is \emph{not} the case, as term graphs allow for explicit sharing of information. 
For example, as a consequence of the discussion in \Cref{sec:example}, we see that the following 
two attack trees are different under the propositional interpretation:
\begin{equation}
    \tikzfig{not_term_graphs_example}
\end{equation}
On the left-hand-side tree, 
the minimum cost of a successful attack is $\$100$, 
whereas on the right-hand-side tree, it is $\$110$.
However, if we were to view these two attack trees as ordinary syntax graphs, 
where one freely merges isomorphic subgraphs, 
they would both represent the same term:
    $$ \AND_2(\OR_2(\mathtt{D}, \mathtt{F}), \OR_2(\mathtt{F}, \mathtt{S})). $$
The ability to explicitly treat sharing is the first key difference between ordinary, algebraic terms and term graphs.

The second key difference is that term graphs, in the sense of \Cref{def:term-graph} below, 
allow for multiple \emph{inputs} and \emph{outputs}.
This enables us to compose term graphs such as attack trees from smaller ones,
generalising the \emph{modular composition} of attack trees \cite[Definition II.4]{lopuhaa2024attack}.
For example: 

\begin{equation}\label{eq:composition-example}
    \tikzfig{composition_example}
\end{equation}

Here, input nodes are represented by small squares (``$\square$'') 
and output nodes are the ends of wires not connected to any other node. 
Both are ordered and read left to right.

We now introduce term graphs and their compositional structure in detail,
following the general theory laid out in \cite{corradini1999algebraic} 
for everything not specific to attack trees. 
In the following definition, 
we use the notation $\listfun(X)$ for the set of lists (i.e.~finite, ordered sequences) 
over a set $X$, and $\toset{l}$ for the set of all elements appearing in a list $l$.

\begin{definition}[Term graph]\label{def:term-graph}
    A \emph{term graph} $T$ over a signature $\Sigma$, or \emph{$\Sigma$-term graph},
    is a tuple 
        $$ T=(N^T,\inputs^T,\outputs^T,\labelfun^T,\ch^T), $$ 
    where:
    \begin{enumerate}
        \item $N^T$ is a finite set of \emph{nodes}.
        \item $\inputs^T \in \listfun(N^T)$ is a list of \emph{input nodes}, and 
        \item $\outputs^T \in \listfun(N^T)$ is a list of \emph{output nodes}.
        \item $\labelfun^T \colon N^T \setminus \toset{\inputs^T} \to \Sigma $ is a function
        assigning to each non-input node a function symbol from $\Sigma$. 
        \item $\ch^T \colon N^T \setminus \toset{\inputs^T} \to \listfun(N)$ is a function assigning 
        to each non-input node $n$ its list of \emph{children} $\ch^T\!(n)$,
    \end{enumerate}
    These data are required to satisfy the following axioms.
    \begin{enumerate}
        \item For each non-input node $n\in N^T$, 
            $$\arity_\Sigma(\labelfun^T(n)) = \mathrm{length}(\ch^T(n)).$$
        \item The directed graph $(N^T,E^T)$ is acyclic, where 
            $$ E^T := \{ (n, m) \in N^T \times N^T \mid m \in \ch^T(n) \}.$$
        \item The list of input nodes $\inputs^T$ contains no duplicates.
    \end{enumerate}
    We write $T\colon i \to j$ for term graphs with $i$ inputs and $j$ outputs.     An \emph{isomorphism of term graphs} $S$ and $T$ is a bijection $N^S \to N^{T}$
    preserving the input and output nodes (including their ordering), 
    the labelling function, and the child function.
    We will generally consider term graphs up to isomorphism, treating isomorphic term graphs as equal.
    
    The set of all $\Sigma$-term graphs with $i$ inputs and $j$ outputs 
    (up to isomorphism) 
    is denoted $\termgraphs{\Sigma}(i, j)$.
\end{definition}

From this perspective, attack trees are term graphs with $0$ inputs and $1$ output: 

\begin{definition}[Attack tree]
    Let $B$ be a set (thought of as consisting of basic attack step labels).
    An \emph{attack tree} $T$ over $B$ is an $\mathsf{AT}(B)$-term graph $T\colon 0\to 1$.
\end{definition}

We may think of general $\mathsf{AT}(B)$-term graphs as representing \emph{components} of attack trees,
yet to be composed into a full attack tree.
We will therefore use the following terminology.

\begin{definition}[Attack tree component]\label{def:attack-tree-component}
    Let $B$ be a set.
    An \emph{attack tree component} over $B$ is an $\mathsf{AT}(B)$-term graph.
\end{definition}

\begin{example}
    Consider the following attack tree component over $B= \{ \mathtt{D}, \mathtt{F}, \mathtt{S} \}$:
    \begin{equation*}
        \tikzfig{attack_tree_component_example}
    \end{equation*}
    It can be defined more precisely as a term graph $T\colon 1 \to 3$ with one input and three outputs, 
    and the following data: 
    \begin{enumerate}
        \item Three nodes, arbitrarily chosen to be: $N^T := \{1, 2, 3\}$,
        \item A single-element list of input nodes: $\inputs^T := [1]$,
        \item A three-element list of output nodes: $\outputs^T := [3, 2, 3]$,
        \item A labelling function $\labelfun^T\colon N^T \setminus \toset{\inputs^T} \to \mathsf{AT}(B)$ given by $\labelfun^T(2) = \mathtt{F},\labelfun^T(3) = \AND_2$.
        \item A child function $\ch^T\colon N^T \setminus \toset{\inputs^T} \to \listfun(N^T)$ given by $\ch^T(2) = [\,], \ch^T(3) = [1, 2]$.
    \end{enumerate}
    Note that the graphical representation of $T$ is read bottom to top, 
    with squares representing input nodes. 
    Output nodes, on the other hand, are depicted by all those nodes that have outgoing wires towards the top of the diagram, 
    not connected to any other node.
    In particular, the list of output nodes may contain duplicates, 
    whereas the list of input nodes may not (the third of the axioms in \Cref{def:term-graph}).
\end{example}

\subsection{The sequential composition of term graphs}

The reason we consider term graphs with multiple inputs and outputs, 
and attack tree components in particular, is 
that it allows us to understand the compositional structure of attack trees.
There are two ways to compose term graphs: 
\emph{sequential composition} and \emph{parallel composition}.

We start with the former, 
an example for which was already given in \eqref{eq:composition-example}.
The sequential composition is given by gluing the output nodes of the first term graph
to the input nodes of the second term graph. More formally:

\begin{definition}[Sequential composition of term graphs]\label{def:sequential-composition-of-term-graphs}
    Let $S\colon i\to j,T\colon j \to k$ be term graphs over a signature $\Sigma$. 
    Their \emph{(sequential) composition} is 
        $ T \circ S\colon i \to k, $
    where: 
    \begin{enumerate}
        \item $N^{T \circ S} := \left(N^{S} \sqcup N^{T}\right) / \sim $.
        Here, $\sim$ is the equivalence relation that identifies the $l$-th output node
        of $S$ with the $l$-th input node of $T$, for all $l\in \{1,\dots, j\}$.
        \item $\inputs^{T \circ S}:= \phi^S_*[\inputs^{S}]$ and $\outputs^{T \circ S}:= \phi^T_*[\outputs^T]$,
        where $\phi^S\colon N^{S} \to N^{T \circ S}$, $\phi^T: N^{T} \to N^{T \circ S}$ 
        are the inclusion maps followed by the canonical projection onto the quotient, 
        and $f_*\colon \listfun(X) \to \listfun(Y) $ denotes the extension of a map $f:X\to Y$ between sets $X,Y$ to lists.
        \item $\labelfun^{T \circ S} \colon N^{T \circ S} \setminus \toset{\inputs^{T \circ S}} \to \Sigma$
        is given by
            $$ 
            \labelfun^{T \circ S}([n]) := \begin{cases}
                \labelfun^{T}(n) \text{ if } n \in N^T \setminus \toset{\inputs^T} \\
                \labelfun^{S}(n) \text{ if } n \in N^S \setminus \toset{\inputs^S},
            \end{cases}
            $$
        for all $[n] \in N^{T \circ S}$
        \item Similarly, $\ch^{T \circ S} \colon N^{T \circ S} \to \listfun(N^{T \circ S})$
        is given by
            $$ 
            \ch^{T \circ S}([n]) := \begin{cases}
                \ch^{T}(n) \text{ if } n \in N^T \setminus \toset{\inputs^T} \\
                \ch^{S}(n) \text{ if } n \in N^S \setminus \toset{\inputs^S},
            \end{cases}
            $$
        for all $[n] \in N^{T \circ S}$
    \end{enumerate} 
    Note that both $\labelfun^{T \circ S}$ and $\ch^{T \circ S}$ 
    are well-defined this way, by the definitions of $N^{T \circ S}$ and $\inputs^{T \circ S}$.
\end{definition}

\subsection{The parallel composition of term graphs}

The parallel composition is given by the disjoint union, 
concatenating the respective lists of input and output nodes:
\begin{equation}\label{eq:parallel-composition-example}
    \tikzfig{parallel_composition_example}
\end{equation}
The full definition is as follows.

\begin{definition}[Parallel composition of term graphs]\label{def:parallel-composition-of-term-graphs}
    Let $S\colon i\to j,T\colon k \to l$ be term graphs over a signature $\Sigma$. 
    Their \emph{parallel composition} is 
        $ T \otimes S\colon i + k \to j + l, $
    where: 
    \begin{enumerate}
        \item $N^{T \otimes S} := N^{S} \sqcup N^{T}$.
        \item $\inputs^{T \otimes S}:= \inputs^{S} \cdot \inputs^{T}$ and 
        $\outputs^{T \otimes S}:= \outputs^{S} \cdot \outputs^{T}$.
        \item $\labelfun^{T \otimes S}\colon N^{T \otimes S} \setminus \toset{\inputs^{T \otimes S}} \to \Sigma$
        is given by
            $$ 
            \labelfun^{T \otimes S}(n) := \begin{cases}
                \labelfun^{T}(n) \text{ if } n \in N^T \setminus \toset{\inputs^T} \\
                \labelfun^{S}(n) \text{ if } n \in N^S \setminus \toset{\inputs^S},
            \end{cases}
            $$
        for all $n \in N^{T \otimes S}$.
        \item Similarly, $\ch^{T \otimes S}\colon N^{T \otimes S} \to \listfun(N^{T \otimes S})$
        is given by
            $$ 
            \ch^{T \otimes S}(n) := \begin{cases}
                \ch^{T}(n) \text{ if } n \in N^T \setminus \toset{\inputs^T} \\
                \ch^{S}(n) \text{ if } n \in N^S \setminus \toset{\inputs^S},
            \end{cases} 
            $$
        for all $n \in N^{T \otimes S}$.
    \end{enumerate}
\end{definition}

\subsection{Atomic term graphs}

We will show that every term graph can be constructed, via parallel and sequential composition, from so-called \emph{atomic} term graphs. These consist of term graphs of function symbols, plus a few structural ones which we describe below.


\subsubsection{Identity term graphs}

The simplest term graphs
 are the \emph{identity term graphs}, 
$\id_0\colon 0 \to 0$ and $\id_1\colon 1 \to 1$. 
Here, $\id_0$ is the empty term graph, and 
 $\id_1$ has exactly one node, which is both input and output:
\begin{equation*}
    \tikzfig{identity_with_one_in_and_output}
\end{equation*}
Note that $\id_0$ is the identity for parallel composition, and the parallel composition of an appropriate amount of copies of $\id_1$ is the identity for sequential composition.
\subsubsection{Term graphs of function symbols}

Each function symbol $f$ in the signature $\Sigma$
has an \emph{associated term graph} $\langle f \rangle$
with $\arity(f) + 1$ nodes as follows: 

\begin{equation*}
    \tikzfig{associated_term_graph}
\end{equation*}

\subsubsection{Copy, delete, swap}\label{sec:copy-and-delete-term-graphs}

The term graphs $\copymap\colon 1\to 2$, $\delmap\colon 1\to 0$, $\swap\colon 2 \to 2$ are
the following term graphs:
\mlz{Can you make this picture prettier?}
\begin{equation*}
    \tikzfig{copy_and_delete}
\end{equation*}

In words, $\copymap$ is a term graph with one input that is also two outputs, $\delmap$ has one input and no outputs, and $\swap$ has two inputs which are also outputs, but interchanged. Note that despite their names and appearance, these are not (yet) maps that duplicate inputs, delete inputs, and swap inputs; that will come later, when we interpret these terms graphs as operators in a channel category in Section \ref{sec:channels}.






\begin{definition}
    An \emph{atomic term graph} over the signature $\Sigma$
    is a term graph $E$ such that  
        $$ E \in \{\id_0, \id_1,\copymap, \delmap, \swap\} \cup \{ \langle f \rangle \mid f\in \Sigma\}. $$
\end{definition}

\subsection{Decomposition into atomic term graphs}
The following theorem is \cite[Theorem 9]{corradini1999algebraic}.

\begin{theorem}\label{thm:decomposition-into-atomic-term-graphs}
    Every term graph is a sequential composite of parallel composites of 
    atomic term graphs. 

    In more detail, let $T$ be a term graph over a signature $\Sigma$. 
    Then there exist $N$, $k_1, \dots, k_N\in \mathbb{Z}_{\geq 0}$,
    and atomic term graphs 
    $A_{1,1}, \dots, A_{k_1,1}, \dots A_{1,N}, \dots A_{k_N,N}$ over $\Sigma$
    such that 
        $$ T = \mathop{\bigcirc}\limits_{i = 1}^{N} \bigotimes_{j = 1}^{k_i} A_{i,j}. $$ 
\end{theorem}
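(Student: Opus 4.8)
The plan is to induct on the number of non-input nodes $\lvert N^T \setminus \toset{\inputs^T}\rvert$, peeling them off one at a time in topological order so that the resulting layers assemble directly into the claimed form $T = \bigcirc_i \bigotimes_j A_{i,j}$. The guiding intuition is that each atomic term graph $\langle f\rangle$ contributes the creation of one labelled node, while $\copymap$, $\delmap$, $\swap$, $\id_1$ handle all the ``plumbing'' (sharing, discarding, reordering, passing through) between consecutive node creations.

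First I would dispose of the base case: a term graph $T\colon i\to j$ with no non-input nodes. Here $N^T = \toset{\inputs^T}$, so every entry of $\outputs^T$ is an input node, and $T$ is pure \emph{wiring}, determined by the function sending each of the $j$ output positions to the input it names. Such a wiring is a sequential composite of parallel composites of $\copymap,\delmap,\swap,\id_1$: for each input, a binary tree of $\copymap$'s (or a single $\delmap$, if it is named zero times) produces exactly as many copies as output positions refer to it, and a permutation built from $\swap$'s and $\id_1$'s then arranges these copies into the order prescribed by $\outputs^T$. Establishing this ``wiring lemma'' is the combinatorial heart of the argument, and I expect the fan-out/permutation bookkeeping to be the main obstacle: one must verify that an arbitrary map between finite lists of wires is realised by copy/delete/swap/identity and that every intermediate gadget is genuinely the claimed term graph up to isomorphism.

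For the inductive step, suppose $T$ has at least one non-input node. Since $(N^T, E^T)$ is acyclic it admits a topological order in which children precede parents; the first non-input node $v$ in this order then has all its children in $\toset{\inputs^T}$, since any non-input child would have to precede $v$, contradicting minimality. I would factor $T = T' \circ A \circ W$, where $W$ is a wiring (handled by the base-case lemma) that collects the children of $v$, in the order given by $\ch^T(v)$, into the first $r := \arity_\Sigma(\labelfun^T(v))$ positions while retaining one copy of every input still needed later; where $A := \langle \labelfun^T(v)\rangle \otimes \id_1 \otimes \cdots \otimes \id_1$ creates the node $v$ from those $r$ wires and passes the remaining wires through; and where $T'$ is the rest of the term graph, now having $v$ as one of its inputs and hence one fewer non-input node, to which the induction hypothesis applies.

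Assembling the pieces, $A$ is already a parallel composite of atomics, $W$ is a sequential composite of parallel composites of atomics by the wiring lemma, and the decomposition of $T'$ has the desired shape by induction; splicing them together with $\circ$ yields the normal form. The only delicate point is the frontier bookkeeping—ensuring that at the moment $v$ is created, every child of $v$ together with every input having a later occurrence is present exactly the right number of times, while inputs used only by $v$ are discarded—which is precisely what $W$, via the wiring lemma, is built to guarantee. Because the statement is purely existential, no minimality or uniqueness of the decomposition needs to be established, which keeps the argument to verifying realisability rather than canonicity.
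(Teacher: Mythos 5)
Your proposal is sound. Note that the paper does not prove this statement itself: it is quoted as Theorem 9 of \cite{corradini1999algebraic}, and the paper only relies on that proof being constructive (cf.\ the remark after Theorem \ref{thm:algorithm}). What you have written is essentially a correct reconstruction of that standard argument: a normal form obtained by topologically sorting the non-input nodes, emitting one $\langle f\rangle$-layer per node, and interleaving wiring layers. The two load-bearing points are exactly the ones you identify. First, the ``wiring lemma'': a term graph $i\to j$ with no non-input nodes is, up to isomorphism, precisely a function from the $j$ output positions to the $i$ inputs, and every such function is realised by fan-out trees of $\copymap$, a $\delmap$ for each unused input, and a permutation built from $\swap$ and $\id_1$ --- this is the standard presentation of commutative comonoids and is where the real combinatorial work sits. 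Second, the frontier bookkeeping in the factorisation $T = T' \circ A \circ W$, where you must (and do) arrange that $W$ supplies each child of $v$ in the order of $\ch^T(v)$ (with multiplicity, since child lists may contain duplicates) while retaining one pass-through copy of every input still referenced by $T'$ or by $\outputs^T$. One small point worth making explicit when writing this up: splicing the three decompositions into a single $\bigcirc_i\bigotimes_j A_{i,j}$ uses associativity of $\circ$ and the interchange law for term graphs, i.e.\ the channel-category axioms of $\termgraphs{\Sigma}$, which should be verified for the concrete quotient construction of Definitions \ref{def:sequential-composition-of-term-graphs} and \ref{def:parallel-composition-of-term-graphs} before being used. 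With that caveat, no step of your outline fails.
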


\begin{example}\label{ex:decomposition-example}
    Consider the attack tree $T$ from the right hand side of \Cref{eq:composition-example}.
    It can be decomposed into atomic $\mathsf{AT}(B)$-term graphs as follows:
    \begin{equation}\label{eq:decomposition-example}
        \tikzfig{decomposition_example}
    \end{equation}
    We can also write this algebraically as,
    \begin{equation}\label{eq:decomposition-example-algebraic}
    \begin{aligned}
        T = \AND &\circ (\OR \otimes \OR) \\
                &\circ (\id_1 \otimes \copymap \otimes \id_1) \\
                &\circ (\mathtt{D} \otimes \mathtt{F} \otimes \mathtt{S}),
    \end{aligned}
    \end{equation}
    and, in fact, by \Cref{thm:decomposition-into-atomic-term-graphs}, 
    every attack tree can be written in a similar form.
    Note that this decomposition is not unique, 
    as we could equally write $T$ as,
    \begin{equation}\label{eq:shorter-decomposition}
        T = \AND \circ (\OR \otimes \OR) \circ (\mathtt{D} \otimes \copymap \otimes \mathtt{S}) \circ \mathtt{F}, 
    \end{equation}
    or even arbitrarily reorder parallel composites by introducing $\swap$ term graphs.
\end{example}

\section{Semantics: Channels}\label{sec:channels}

So far, we have only discussed syntax. 
Now, we investigate semantics:
what does a term graph \emph{denote}? 

To answer this question, 
we must first assign meaning to each function symbol 
in the given signature.
This is formalised by the notion of an \emph{interpretation}.
From such an interpretation, 
we directly obtain semantics 
for the atomic term graphs associated to each function symbol.
In order to lift these semantics to \emph{general}, composite term graphs,
we need to know the semantics of all further atomic term graphs,
and, most importantly, how to compose them.
This information is captured by the notion of a \emph{channel category}:
a structure consisting of sets of \emph{channels}, 
together with rules for composing them sequentially and in parallel.

Hence, the answer to the question of the semantics of term graphs is that,
in general, \emph{term graphs denote channels}. 
Examples of channels include ordinary functions, 
stochastic matrices (modelling conditional probability distributions, or, ``functions with random outputs''),
and term graphs themselves. 
What a ``channel'' is, in general, we describe \emph{axiomatically}, via the notion of a channel category.

\begin{remark}
    For readers familiar with the literature on \emph{gs-monoidal categories} 
    (see \cite{gadducci1996algebraic,corradini1999algebraic,fritz2023free}),
    we note that channel categories can be defined concisely as strict gs-monoidal categories generated by a single object.  
\end{remark}

The following definition does not require any prior knowledge of category theory.

\begin{definition}[Channel category]
    A \emph{channel category} $C$ consists of the following, for all $i,j,k,l \in \mathbb{Z}_{\geq 0}$:
    \begin{enumerate}
        \item A set $C(i, j)$, whose elements we call \emph{channels with $i$ inputs and $j$ outputs}.
        \item A function $\circ\colon C(i, j) \times C(j, k) \to C(i, k)$ 
        called \emph{(sequential) composition} of channels.
        \item A function $\otimes\colon C(i, j) \times C(k, l) \to C(i+k, j+l)$ called \emph{parallel composition} of channels.
        \item A number of special channels:
        \begin{enumerate}
            \item $\mathsf{id}_i \in C(i, i)$, the \emph{identity channels},
            \item $\mathsf{swap}_{i,j} \in C(i+j, i+j)$, the \emph{swap gates}.
            \item $\mathsf{copy} \in C(1, 2)$, the \emph{copy gate},
            \item $\mathsf{del} \in C(1, 0)$, the \emph{delete} or \emph{discard gate},
        \end{enumerate}
    \end{enumerate}
    These data are required to satisfy the following axioms, for all $i,j,k,l,m,n\in\mathbb{Z}_{\geq 0}$: 
    \begin{enumerate}
        \item (Associativity and unitality of sequential composition) 
        For all $f \in C(i, j)$, $g \in C(j, k)$, and $h \in C(k, l)$,  
            $$(h \circ g) \circ f = h \circ (g \circ f), \quad f \circ \mathsf{id}_i = f = \mathsf{id}_j \circ f.$$
        \item (Associativity and unitality of parallel composition) 
        For all $f \in C(i_1, j_1)$, $g \in C(i_2, j_2)$, and $h \in C(i_3, j_3)$,  
            $$(h \otimes g) \otimes f = h \otimes (g \otimes f), \quad f \otimes \mathsf{id}_0 = f = \mathsf{id}_0 \otimes f.$$
        \item (Symmetry and functoriality of parallel composition)
        For all $f \in C(i, j)$ and $g \in C(k, l)$,
        \begin{align*}
        \mathsf{swap}_{l,j} \circ (g \otimes f) &= f \otimes g = (g \otimes f) \circ \mathsf{swap}_{i,k},\\
        \mathsf{swap}_{i,j} \circ \mathsf{swap}_{j, i} &= \mathsf{id}_{i+j},
        \end{align*}
        as well as, for all $f_1 \in C(i, j)$, $f_2 \in C(k, l)$, $g_1 \in C(j, m)$, and $g_2 \in C(l, n)$, 
            $$ (g_1 \otimes g_2) \circ (f_1 \otimes f_2) = (g_1 \circ f_1) \otimes (g_2 \circ f_2).$$
        \item (Commutative comonoid laws for copy and discard) 
        The copy and discard gates satisfy the following equations:
        \begin{align*}
        (\mathsf{copy} \otimes \mathsf{id}_1) \circ \mathsf{copy} 
        &= (\mathsf{id}_1 \otimes \mathsf{copy}) \circ \mathsf{copy},\\
        (\mathsf{del} \otimes \mathsf{id}_1) \circ \mathsf{copy} 
        &= \mathsf{id}_1 
        =  (\mathsf{id}_1 \otimes \mathsf{del}) \circ \mathsf{copy},\\
        \mathsf{copy} &= \mathsf{swap}_{1,1} \circ \mathsf{copy}.
        \end{align*}
    \end{enumerate}
\end{definition}




\begin{example}[$\termgraphs{\Sigma}$]
    Let $\Sigma$ be a signature.
    The channel category $\termgraphs{\Sigma}$ of term graphs over $\Sigma$ is given as follows.
    For each $i,j\in \mathbb{Z}_{\geq 0}$, the set of channels $\termgraphs{\Sigma}(i,j)$ is 
    the set of term graphs with $i$ inputs and $j$ outputs.
    The sequential and parallel composition of term graphs are given as defined in \Cref{def:sequential-composition-of-term-graphs,def:parallel-composition-of-term-graphs}.
    The identity channel $\mathsf{id}_i$ is the parallel composition of $i$ copies of $\id_1$.
    The copy and delete gates are the atomic term graphs $\copymap$ and $\delmap$.
    Finally, the swap gate $\mathsf{swap}_{i,j}$ is given by the term graph that swaps the first $i$ inputs with the last $j$ ones.
\end{example}

\begin{example}[$\mathsf{Functions}_X$]
    Let $X$ be a set.
    Then the channel category $\mathsf{Functions}_X$ is given as follows. 
    For each $i,j\in \mathbb{Z}_{\geq 0}$, the set of channels $\mathsf{Functions}_X(i,j)$ is the set of functions $X^i \to X^j$.
    The sequential and parallel composition are given by the ordinary composition of functions and the cartesian product of functions, 
        $$ f \otimes g := f \times g, \;\; (f \times g)(x, y) :=  (f(x), g(y)) $$
    respectively, and the identity channel $\mathsf{id}_i$ is the identity function on $X^i$, for each $i\in \mathbb{Z}_{\geq 0}$.
    The copy gate is given by, 
        $$ \copymap: X \to X^2, \; \copymap(x) := (x, x), $$
    and the delete gate is the unique map $X\to \mathbf{1}$ from $X$ to the one-point set $\mathbf{1}=X^0$.
    Finally, the swap gate $\mathsf{swap}_{i,j}$ is given by the permutation $X^{i+j}\to X^{i+j}$ that swaps the first $i$ coordinates with the last $j$ ones. Note that $\mathsf{Functions}_X(0,1)$ is the set of functions $\mathbf{1} \rightarrow X$, which may be identified with $X$ itself.
\end{example}


\subsection{Semantics}
We now wish to define an \emph{attack tree semantics} 
as a mapping from attack tree components to some channel category that preserves the compositional structure of 
attack trees, as motivated in \Cref{sec:introduction,sec:example}.
What it means to ``preserve compositional structure'' is 
formalised by the concept of a \emph{functor of channel categories}.

\begin{definition}\label{def:functor_of_channel_categories}
    Let $C,D$ be channel categories. 
    A \emph{functor of channel categories} $F\colon C\to D$ is a family of functions $F\colon C(i, j) \to D(i, j)$ for each pair $i, j \in \mathbb{Z}_{\geq 0}$, 
    such that the following equations hold for all channels on which they are defined: 
    \begin{align*}
    F(g \circ f) &= F(g) \circ F(f), &
 F(\mathsf{id}_i) &= \mathsf{id}_i,\\
    F(g \otimes f) &= F(g) \otimes F(f), & F(\mathsf{swap}_{i,j}) &= \mathsf{swap}_{i,j},\\
    F(\mathsf{copy}) &= \mathsf{copy}, &
      F(\mathsf{del}) &= \mathsf{del},
    \end{align*}
\end{definition} 

With this definition established, we can now present the definition that constitutes our solution to \Cref{main-problem}.

\begin{definition}[Attack tree semantics and metrics] \label{def:semantics}
    Let $B$ be a set. A \emph{semantics for $\mathsf{AT}(B)$}, or a \emph{metric for $\mathsf{AT}(B)$}, is a pair $(C,F)$ of a channel category $C$ and a functor of channel categories $F\colon \termgraphs{\mathsf{AT}(B)} \rightarrow C$.
\end{definition}

The advantage of this definition is that it is general enough to capture various metrics and semantics from the literature (see Section \ref{sec:compositionality-of-attack-tree-metrics}). At the same time, the preservation of $\circ$ and $\otimes$ means that the decomposition of Theorem \ref{thm:decomposition-into-atomic-term-graphs} is preserved. When the target channel category is ``sufficiently quantitative'', such as the examples $\mathsf{Functions}$ and $\mathsf{Matrices}$ of Section \ref{sec:example}, this can be leveraged to obtain metric computation algorithms. We discuss this in detail in Theorem \ref{thm:algorithm}.

In our framework, \emph{there is no difference between AT metrics and semantics}. Semantics are thought of as being more qualitative while metrics are more quantitative, but this distinction is not a mathematical one.

Our category-theoretical formulation can easily be applied to extensions of the AT framework, such as \emph{dynamic ATs} \cite{jhawar2015attack}, attack-defence trees \cite{kordy2010foundations} and attack-fault trees \cite{kumar2017quantitative}, and fault tree extensions such as dynamic fault trees \cite{aslansefat2020dynamic}, as these are all directed acyclic graphs with a finite number of gate types. Interpreting existing metrics of such frameworks as channel categories, as we do for standard ATs in Section \ref{sec:compositionality-of-attack-tree-metrics}, is beyond the scope of this paper.

\subsection{Interpretations} 

The downside of Definition \ref{def:semantics} is that it does not explain how a functor of channel categories can be constructed. In this subsection, we show that such a functor is uniquely characterised by the image of the elements of its signature. The map from a signature to a channel category is called an \emph{interpretation}.


\begin{definition}[Interpretation of a signature]
    Let $\Sigma$ be a signature. 
    An \emph{interpretation} of $\Sigma$ in a channel category $C$
    is a map
        $$\mathcal{I}\colon \Sigma \to \bigsqcup_{i\in \mathbb{Z}_{\geq 0}} C(i, 1)$$
    such that for all $f\in \Sigma$, $\mathcal{I}(f) \in C(\arity(f), 1)$. Here, $\bigsqcup$ denotes the disjoint union of sets.
\end{definition}

\begin{example}\label{ex:boolean-functions-interpretation}
    Consider the signature $\mathsf{AT}(B)$ 
    of attack trees over a set of basic attack step labels $B$ 
    from \Cref{ex:attack-and-fault-tree-signatures}.
    Let $t\colon B \rightarrow \{0,1\}$ be an assignment of truth values to each basic attack step label, 
    which we interpret as indicating whether or not a basic attack step labelled with $b\in B$ is successful.
    For each such $t$, we may interpret 
    the basic attack step labels as their corresponding truth values, 
    and the logical connectives as their corresponding Boolean functions.
    This yields the following interpretation of $\mathsf{AT}(B)$ in the channel category $\mathsf{Functions}_{\{0,1\}}$:
    \begin{align*}
        \mathcal{I}_t\colon\mathsf{AT}(B) &\to \mathsf{Functions}_{\{0,1\}}\\
        \mathcal{I}_t(b)(\,) &:= t(b)   &&(b\in B)\\
        \mathcal{I}_t(\AND_i)(x_1, \dots, x_i) &:= \bigwedge_{k=1}^i x_k &&(i\in \mathbb{Z}_{\geq 0})\\
        \mathcal{I}_t(\OR_i)(x_1, \dots, x_i) &:= \bigvee_{k=1}^i x_k &&(i\in \mathbb{Z}_{\geq 0})
    \end{align*} 
    These equations show that our notion of an interpretation is well-aligned with the standard notion from logic.
\end{example}

The following theorem says that there is a \emph{unique} way to extend a given interpretation of any signature 
to a functor of channel categories; see \cite[Theorem 23]{corradini1999algebraic} for a proof in the setting of gs-monoidal categories. 

Because term graphs can be decomposed into atomic graphs and channel category functors preserve composition, functors are uniquely defined by the interpretation of the signature:

\begin{theorem}\label{thm:term-graphs-are-free-channel-category}\emph{\cite[Theorem 23]{corradini1999algebraic}}
    Let $\Sigma$ be a signature and let $C$ be a channel category,
    and let $\mathcal{I}$ be an interpretation of $\Sigma$ in $C$.
    Then there exists a unique functor of channel categories,
        $$\sem{\,\cdot\,}{\mathcal{I}}\colon \termgraphs{\Sigma} \to C,$$
    such that $\sem{f}{\mathcal{I}} = \mathcal{I}(f)$ for all $f \in \Sigma$. Conversely, for every functor of channel categories $F\colon \termgraphs{\Sigma} \rightarrow C$ there exists an interpretation $\mathcal{I}$ such that $F = \sem{\,\cdot\,}{\mathcal{I}}$.
\end{theorem}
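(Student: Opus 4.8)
The plan is to use the decomposition result of \Cref{thm:decomposition-into-atomic-term-graphs} to reduce everything to the finitely many atomic term graphs. First I would pin down the value of $\sem{\,\cdot\,}{\mathcal{I}}$ on atomic term graphs: the functor axioms of \Cref{def:functor_of_channel_categories} force $\sem{\id_1}{\mathcal{I}} = \mathsf{id}_1$, $\sem{\copymap}{\mathcal{I}} = \mathsf{copy}$, $\sem{\delmap}{\mathcal{I}} = \mathsf{del}$, and $\sem{\swap}{\mathcal{I}} = \mathsf{swap}_{1,1}$, while the defining requirement fixes $\sem{\langle f \rangle}{\mathcal{I}} := \mathcal{I}(f)$ for each function symbol $f \in \Sigma$ (note $\langle f \rangle\colon \arity(f) \to 1$, so that $\mathcal{I}(f) \in C(\arity(f), 1)$ lands in the correct set). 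Given an arbitrary term graph $T$, I would then choose a decomposition $T = \mathop{\bigcirc}\limits_{i=1}^{N} \bigotimes_{j=1}^{k_i} A_{i,j}$ into atomic term graphs and set
\[
\sem{T}{\mathcal{I}} := \mathop{\bigcirc}\limits_{i=1}^{N} \bigotimes_{j=1}^{k_i} \sem{A_{i,j}}{\mathcal{I}},
\]
where $\circ$ and $\otimes$ now denote the composition operations of the target channel category $C$.

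The main obstacle is well-definedness: as \Cref{ex:decomposition-example} emphasises, the decomposition of $T$ into atomic term graphs is far from unique, so I must show that the channel defined above does not depend on the chosen decomposition. The key point is that any two decompositions of the same term graph can be transformed into one another by finitely many local rewrites, each of which is an instance of one of the channel-category axioms (associativity and unitality of $\circ$ and $\otimes$, the symmetry and functoriality laws for $\swap$, and the commutative-comonoid laws for $\copymap$ and $\delmap$); establishing that these axioms suffice is precisely the coherence content underlying \cite[Theorem 23]{corradini1999algebraic}. Since $C$ is assumed to satisfy exactly the same axioms, each such rewrite leaves the value of the formula in $C$ unchanged, and hence all decompositions of $T$ evaluate to the same channel. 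This is the step I expect to be genuinely hard, and where one must lean on the gs-monoidal coherence theorem (or, equivalently, a confluent normal-form argument) rather than on a short direct computation.

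Once well-definedness is secured, functoriality is routine: concatenating a decomposition of $S$ with one of $T$ yields decompositions of $T \circ S$ and of $T \otimes S$, from which $\sem{T \circ S}{\mathcal{I}} = \sem{T}{\mathcal{I}} \circ \sem{S}{\mathcal{I}}$ and $\sem{T \otimes S}{\mathcal{I}} = \sem{T}{\mathcal{I}} \otimes \sem{S}{\mathcal{I}}$ follow immediately, while preservation of the structural channels $\mathsf{id}_i$, $\mathsf{swap}_{i,j}$, $\copymap$, and $\delmap$ holds by our definition on atomic term graphs. For uniqueness I would observe that any functor $F$ satisfying $F(\langle f \rangle) = \mathcal{I}(f)$ is forced on every atomic term graph by the functor axioms, and then, because $F$ preserves $\circ$ and $\otimes$, its value on any $T$ is determined by applying $F$ to a decomposition; hence $F = \sem{\,\cdot\,}{\mathcal{I}}$. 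Finally, for the converse I would set $\mathcal{I}(f) := F(\langle f \rangle)$ for a given functor $F\colon \termgraphs{\Sigma} \to C$; this is a valid interpretation since $\langle f \rangle\colon \arity(f) \to 1$ gives $F(\langle f \rangle) \in C(\arity(f), 1)$, and the uniqueness just established then yields $F = \sem{\,\cdot\,}{\mathcal{I}}$.
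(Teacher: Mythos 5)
Your outline is correct and takes essentially the same route as the paper, which does not prove this statement itself but defers entirely to \cite[Theorem~23]{corradini1999algebraic}: reduce to atomic term graphs via Theorem~\ref{thm:decomposition-into-atomic-term-graphs}, define the functor on a chosen decomposition, and invoke coherence for well-definedness. Be aware, however, that your argument is not self-contained: the well-definedness step --- that any two decompositions of the same term graph are related by finitely many rewrites instancing the channel-category axioms --- is precisely the substance of the cited freeness theorem, so what you have is a correct reduction to that coherence result (plus routine functoriality, uniqueness, and converse arguments) rather than an independent proof of it.
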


This allows us to define the semantics of a term graph 
\emph{under} an interpretation.

\begin{definition}[Semantics under interpretation]
    Let $T$ be a term graph over a signature $\Sigma$,
    and let $\mathcal{I}$ be an interpretation of $\Sigma$ in a channel category $C$.
    The \emph{semantics} of $T$ under $\mathcal{I}$ is defined to be
    $\sem{T}{\mathcal{I}}$, i.e.~the value at $T$ of the unique functor of channel categories 
    extending $\mathcal{I}$ to all term graphs.
\end{definition}

\begin{example}\label{ex:semantics-example}
Let $B := \{ \mathtt{D}, \mathtt{F}, \mathtt{S} \}$ be the set of basic attack step labels from \Cref{ex:at-signature-example}, 
let $t\colon B \rightarrow \{0,1\}$ 
be the assignment that maps each basic attack step label to a 
truth value, 
    $$ \mathtt{D} \mapsto 1, \;  \mathtt{F} \mapsto 0, \;\mathtt{S} \mapsto 1, $$ 
and let $\mathcal{I}_t$ be the Boolean function 
interpretation from \Cref{ex:boolean-functions-interpretation}.
What are the semantics $\sem{T}{\mathcal{I}_t}$ of the attack tree $T$ from \Cref{ex:decomposition-example} under $\mathcal{I}_t$?

Since part of the defining property of the semantics under an interpretation is that it preserves compositional structure
(including copy gates), 
we first use the decomposition \eqref{eq:shorter-decomposition} from \Cref{ex:decomposition-example}, 
and then plug in the definitions:
\begin{equation*}
\begin{aligned}
\sem{T}{\mathcal{I}_t} 
&= \sem{\AND \circ (\OR \otimes \OR) \circ (\mathtt{D} \otimes \copymap \otimes \mathtt{S}) \circ \mathtt{F}}{\mathcal{I}_t} \\
&= \sem{\AND}{\mathcal{I}_t} \circ 
   (\sem{\OR}{\mathcal{I}_t} \otimes \sem{\OR}{\mathcal{I}_t}) \\
&\quad \circ (\sem{\mathtt{D}}{\mathcal{I}_t} \otimes \copymap \otimes \sem{\mathtt{S}}{\mathcal{I}_t}) 
   \circ \sem{\mathtt{F}}{\mathcal{I}_t} \\
&= \mathcal{I}_t(\AND) \circ 
   (\mathcal{I}_t(\OR) \otimes \mathcal{I}_t(\OR)) \\
&\quad \circ (\mathcal{I}_t(\mathtt{D}) \otimes \copymap \otimes \mathcal{I}_t(\mathtt{S})) 
   \circ \mathcal{I}_t(\mathtt{F}) \\
&= \mathcal{I}_t(\AND)(\mathcal{I}_t(\OR)(\mathcal{I}_t(\mathtt{D}),\: \mathcal{I}_t(\mathtt{F})), \\
&\qquad\qquad\;\; \mathcal{I}_t(\OR)(\mathcal{I}_t(\mathtt{F}),\: \mathcal{I}_t(\mathtt{S}))) \\
&= (1 \lor 0) \land (0 \lor 1) = 1.
\end{aligned}
\end{equation*}
Hence, under the given interpretation, 
$T$ evaluates to ``true'', as we would expect. 
This example shows once more that our notion of interpretation generalises the standard notion from propositional logic. 
\end{example}

The approach in this example works more generally: we can decompose an AT according to Theorem \ref{thm:decomposition-into-atomic-term-graphs}, interpret the resulting components using $\mathcal{I}$, and then put them back together using $\circ$ and $\otimes$ on the channel category side.

\begin{theorem} \label{thm:algorithm}
Let $T$ be an AT, and let $ T = \mathop{\bigcirc}\limits_{i = 1}^{N} \bigotimes_{j = 1}^{k_i} A_{i,j}$ be its decomposition into atomic AT components. Let $\mathcal{I}$ be an interpretation of $\mathsf{AT}(B)$ in a channel category $C$. Then
\[
\sem{T}{\mathcal{I}} = \mathop{\bigcirc}\limits_{i = 1}^{N} \bigotimes_{j = 1}^{k_i} \sem{A_{i,j}}{\mathcal{I}}.
\]
\end{theorem}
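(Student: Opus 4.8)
The statement follows directly from functoriality, so the proof amounts to bookkeeping. First I would invoke \Cref{thm:term-graphs-are-free-channel-category} to record that $\sem{\,\cdot\,}{\mathcal{I}}\colon \termgraphs{\mathsf{AT}(B)} \to C$ is a functor of channel categories; write $F := \sem{\,\cdot\,}{\mathcal{I}}$. By \Cref{def:functor_of_channel_categories}, $F$ preserves both forms of composition on the \emph{binary} level, i.e.\ $F(g \circ f) = F(g) \circ F(f)$ and $F(g \otimes f) = F(g) \otimes F(f)$, as well as identities, $F(\id_i) = \id_i$. The whole task is then to lift these binary preservation laws to the iterated composites $\mathop{\bigcirc}$ and $\bigotimes$ appearing in the decomposition of $T$ guaranteed by \Cref{thm:decomposition-into-atomic-term-graphs}.

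I would first treat the inner parallel composites by induction on $k_i$. The base case $k_i = 0$ is the empty parallel composite $\id_0$, preserved since $F(\id_0) = \id_0$; each inductive step is a single application of $F(g \otimes f) = F(g) \otimes F(f)$. This gives $F\!\left(\bigotimes_{j=1}^{k_i} A_{i,j}\right) = \bigotimes_{j=1}^{k_i} F(A_{i,j})$ for every layer $i$. The outer sequential composite is handled analogously by induction on $N$, using identity preservation for the base case and $F(g \circ f) = F(g) \circ F(f)$ for the step. Chaining the two inductions yields
\begin{align*}
\sem{T}{\mathcal{I}} = F(T)
&= F\!\left(\mathop{\bigcirc}\limits_{i = 1}^{N} \bigotimes_{j = 1}^{k_i} A_{i,j}\right) \\
&= \mathop{\bigcirc}\limits_{i = 1}^{N} \bigotimes_{j = 1}^{k_i} F(A_{i,j}) \\
&= \mathop{\bigcirc}\limits_{i = 1}^{N} \bigotimes_{j = 1}^{k_i} \sem{A_{i,j}}{\mathcal{I}},
\end{align*}
which is the claim.

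The closest thing to an obstacle is purely a matter of hygiene: the iterated symbols $\mathop{\bigcirc}$ and $\bigotimes$ abbreviate nested binary operations, so their well-definedness relies on the associativity axioms for $\circ$ and $\otimes$ in both $\termgraphs{\mathsf{AT}(B)}$ and $C$. I would fix one bracketing convention (e.g.\ right-nested) and use it consistently on both sides of the equation, after which the inductions go through verbatim. There is no genuine mathematical content beyond functoriality: the theorem simply says that the semantics functor commutes with the decomposition of \Cref{thm:decomposition-into-atomic-term-graphs}, which is exactly what makes the framework computationally usable.
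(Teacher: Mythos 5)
Your proposal is correct and matches the paper's (implicit) argument: the paper offers no separate proof of \Cref{thm:algorithm}, treating it as an immediate consequence of \Cref{thm:term-graphs-are-free-channel-category} and the preservation of $\circ$ and $\otimes$ required by \Cref{def:functor_of_channel_categories}. Your inductive unfolding of the iterated composites is exactly the routine bookkeeping the paper leaves to the reader.
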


Effectively, this is a metric computation algorithm, provided that $\circ, \otimes$ are computationally tractable; our decomposition result follows from \cite{corradini1999algebraic}, which has a constructive proof. This approach is akin to Bayesian network analysis via string diagrams \cite{jacobs2018channel}, which maps the graphical structure of a Bayesian network to the corresponding matrix operations. In Section \ref{sec:compositionality-of-attack-tree-metrics}, we discuss how this result can be used to compute specific metrics.

Note that in an interpretation $\mathcal{I}$, there needs to be no connection between $\mathcal{I}(\mathtt{AND}_i)$ and $\mathcal{I}(\mathtt{AND}_j)$, and the operators $\mathcal{I}(\mathtt{AND}_i)$ need not be `commutative' in the sense that $\mathcal{I}(\mathtt{AND}_{i+j}) \circ \mathsf{swap}_{i,j} = \mathcal{I}(\mathtt{AND}_{i+j})$. One could demand such restrictions, but we leave these out for now to be able to apply our framework to extensions of ATs, which typically have non-commutative structure.

\section{Application to existing metric frameworks}\label{sec:compositionality-of-attack-tree-metrics}

Up to this point, we have provided a compositional theory of 
the syntax and semantics of attack trees on a general level.
What we still need to demonstrate is that all common attack tree metrics
can indeed be viewed as instances of these general compositional semantics.
To this end, we show that all these attack tree metrics correspond to 
the semantics of attack trees under a certain interpretation in a channel category.
More concretely, we give channel category interpretations for the following metrics and semantics:
\begin{itemize}
\item The bottom-up semiring metrics of \cite{mauw2005foundations};
\item The propositional semiring metrics of \cite{lopuhaa2022efficient};
\item Fault tree unreliability \cite{ruijters2015fault};
\item The multiset semantics of \cite{mauw2005foundations};
\item The propositional semantics of \cite{lopuhaa2022efficient}.
\end{itemize}
Together, these cover most metrics used in the literature (with only few exceptions, as discussed in \Cref{sec:non-examples}); note that the set-semantics metrics of \cite{kordy2014dag} coincide with the propositional metrics when the underlying semiring is absorbing (see below), which is almost always the case \cite{lopuhaa2022efficient}.

Within both the bottom-up and propositional metric frameworks, there exist many relevant metrics, such as attack time, cost, skill, probability, etc. Both frameworks encode these by a set $R$ of possible metric values, and two binary operators $+$ and $\cdot$, corresponding to OR and AND gates, respectively. Each basic attack step label $b$ is assigned a metric value $\alpha(b) \in R$, and the map $\alpha\colon B \rightarrow R$ is called an \emph{attribution}. How these together define the AT's metric value differs per framework, as discussed in Section \ref{sec:example}. Nevertheless, both methodologies require the tuple $(R,+,\cdot)$ to satisfy the algebraic property that it forms a semiring:

\begin{definition}[Semiring]\label{def:semiring} 
    A \textit{semiring} is a set $R$ equipped with two binary operations $+$ and $\cdot$, 
    and two elements $0$ and $1$,
    such that $+$ and $\cdot$ are binary associative commutative operations on $R$ with unit $0$ and $1$, respectively, and such that $\cdot$ distributes over $+$, i.e.
    \begin{align*}
        r \cdot (s + t) &= r \cdot s + r \cdot t
    \end{align*}
    for all $r, s, t \in R$. If furthermore $r+(r\cdot s) = r$ for all $r,s$, we call $R$ \emph{absorbing}.
\end{definition}

An overview of metrics from the literature is given in Table \ref{table:semirings-overview}. All are absorbing except \emph{max challenge} and fault tree \emph{unreliability}.

\begin{table}
    \caption{Overview of semirings $(R, +, \,\cdot\,,\, 0, 1)$.}\label{table:semirings-overview}
    \centering
    \begin{tabular}{@{}lccccc@{}}
    \toprule
    Type of metric                                                         & $R$          & $+$    & $\cdot$ & $0$      & $1$      \\ \midrule
    Min. cost                                                              & $[0,\infty]$ & $\min$ & $+$     & $\infty$ & $0$      \\
    \begin{tabular}[c]{@{}l@{}}Min. time\\ (parallel)\end{tabular}         & $[0,\infty]$ & $\min$ & $\max$  & $\infty$ & $0$      \\
    \begin{tabular}[c]{@{}l@{}}Min. time \\ (sequential)\end{tabular}      & $[0,\infty]$ & $\min$ & $+$     & $\infty$ & $0$      \\
    Max. challenge                                                         & $[0,\infty]$ & $\max$ & $\max$  & $\infty$ & $\infty$ \\
    Max. probability                                                       & $[0,1]$      & $\max$ & $\cdot$ & $0$      & $1$      \\ \midrule
    \begin{tabular}[c]{@{}l@{}}Unreliability \\ (fault trees)\end{tabular} & $[0,\infty)$ & $+$    & $\cdot$ & $0$      & $1$      \\ \bottomrule
    \end{tabular}
\end{table}

\subsection{Bottom-up semiring metrics}

We begin with the case of bottom-up metrics.

\begin{definition}
    Let $B$ be a set, let $R$ be a semiring and let $\alpha: B\to R$ be an attribution.
    The \emph{bottom-up interpretation} $\mathcal{I}_{\mathsf{bu}}^\alpha$ is defined as follows:
    \begin{align*}
        \mathcal{I}_{\mathsf{bu}}^\alpha \colon \mathsf{AT}(B) &\to \mathsf{Functions}_R, \\
        \mathcal{I}_{\mathsf{bu}}^\alpha(\AND_i)(x_1, \dots, x_i) &:= \prod_{k=1}^i x_k, \\
        \mathcal{I}_{\mathsf{bu}}^\alpha(\OR_i)(x_1, \dots, x_i) &:= \sum_{k=1}^i x_k, \\
        \mathcal{I}_{\mathsf{bu}}^\alpha(b)(\,) &:= \alpha(b). \qquad (b \in B)
    \end{align*}
\end{definition}

The reason we call this the \emph{bottom-up} interpretation is that 
the semantics under this interpretation recover exactly the 
\emph{bottom-up} semiring metrics of \cite{mauw2005foundations}.
Their name comes from the following application of Theorem \ref{thm:algorithm},
which essentially states that the semantics of an an attack tree $T$ under $\mathcal{I}_{\mathsf{bu}}^\alpha$ can be computed 
recursively (i.e.~``bottom-up'') on the structure of $T$.

\begin{theorem} \label{thm:bu}
    Let $B$ be a set, let $R$ be a semiring and let $\alpha: B\to R$ be an attribution.
    Moreover, let $T$ be an attack tree over $B$ whose unique output $R_T$ has $n$ children, 
    and let $T'_i$ be the attack tree rooted at the $i$-th child of $R_T$, for all $i\in \{1, \dots, n\}$.
    Then:
        $$ \sem{T}{\mathcal{I}_{\mathsf{bu}}^\alpha} = 
        \begin{cases}
            \alpha(b) & \text{if } \labelfun(R_T) = b, b\in B,  \\
            \sum_{i=1}^{n} \sem{T'_i}{\mathcal{I}_{\mathsf{bu}}^\alpha} & \text{if } \labelfun(R_T) = \OR_n,  \\
            \prod_{j=1}^{n} \sem{T'_i}{\mathcal{I}_{\mathsf{bu}}^\alpha} & \text{if } \labelfun(R_T) = \AND_n.
        \end{cases}$$
\end{theorem}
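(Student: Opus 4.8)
The plan is to reduce $T$ to its immediate subtrees by \emph{cutting off the root}, and then exploit that $\sem{\,\cdot\,}{\mathcal{I}_{\mathsf{bu}}^\alpha}$ is a functor of channel categories, so it preserves $\circ$, $\otimes$ and the wiring gates. Write $\ell := \labelfun(R_T)$ and let $c_1,\dots,c_n$ be the children of $R_T$. First I would form the attack tree component $T_0\colon 0\to n$ obtained from $T$ by deleting the node $R_T$ and taking its child list $[c_1,\dots,c_n]$ as the new output list. By \Cref{def:sequential-composition-of-term-graphs}, gluing the $n$ inputs of the associated term graph $\langle \ell\rangle\colon n\to 1$ onto the outputs $c_1,\dots,c_n$ exactly reconstructs $T$, so $T = \langle\ell\rangle\circ T_0$. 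Applying functoriality and $\sem{\langle\ell\rangle}{\mathcal{I}_{\mathsf{bu}}^\alpha}=\mathcal{I}_{\mathsf{bu}}^\alpha(\ell)$ yields $\sem{T}{\mathcal{I}_{\mathsf{bu}}^\alpha} = \mathcal{I}_{\mathsf{bu}}^\alpha(\ell)\circ\sem{T_0}{\mathcal{I}_{\mathsf{bu}}^\alpha}$.

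Next I would evaluate the root gate. Since $T_0$ has no inputs, $\sem{T_0}{\mathcal{I}_{\mathsf{bu}}^\alpha}$ is a function $\mathbf{1}\to R^n$, i.e.\ a tuple $(v_1,\dots,v_n)\in R^n$. Feeding this into the definition of $\mathcal{I}_{\mathsf{bu}}^\alpha$ gives the three cases immediately: if $\ell=b\in B$ then $n=0$ and the value is the constant $\alpha(b)$; if $\ell=\OR_n$ the value is $\sum_{k=1}^n v_k$; and if $\ell=\AND_n$ it is $\prod_{k=1}^n v_k$. It therefore only remains to identify the $i$-th coordinate $v_i$ with $\sem{T'_i}{\mathcal{I}_{\mathsf{bu}}^\alpha}$.

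For this last step I would use projections. Let $e_i\colon n\to 1$ be the wiring component that discards all outputs except the $i$-th (a parallel composite of $\delmap$'s and one $\id_1$); its image under the functor is the coordinate projection $\mathrm{pr}_i$, so $v_i = \mathrm{pr}_i\big(\sem{T_0}{\mathcal{I}_{\mathsf{bu}}^\alpha}\big) = \sem{e_i\circ T_0}{\mathcal{I}_{\mathsf{bu}}^\alpha}$. The term graph $e_i\circ T_0$ has single output $c_i$ and, after discarding every node not reachable from $c_i$, is precisely the subtree $T'_i$. Hence it suffices to prove a \emph{garbage-collection lemma}: the bottom-up semantics of a term graph $S\colon 0\to 1$ depends only on the sub-term-graph reachable from its output. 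This holds because children of reachable nodes are reachable, so unreachable nodes never feed into the output computation, and in the \emph{cartesian} category $\mathsf{Functions}_R$ the $\delmap$ gate is the terminal map, so discarding these nodes leaves the output unchanged.

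I expect this last point to be the main obstacle, precisely because attack tree components permit \emph{sharing}: the subtrees $T'_1,\dots,T'_n$ may overlap inside $T$ (as with $\mathtt{forge\ badge}$ in \Cref{fig:example-attack-tree}), so $T_0\neq T'_1\otimes\cdots\otimes T'_n$ as term graphs and one cannot simply read off the coordinates. The resolution is exactly the phenomenon described in \Cref{sec:example}: under the bottom-up interpretation the diagonal $\copymap$ and terminal $\delmap$ of $\mathsf{Functions}_R$ cause each shared node to be recomputed independently for each output, which is what makes $v_i=\sem{T'_i}{\mathcal{I}_{\mathsf{bu}}^\alpha}$ hold regardless of overlap. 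The $\AND_n$ case is identical, with $\prod$ in place of $\sum$.
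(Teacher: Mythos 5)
Your proposal is correct, but it takes a genuinely different route from the paper. The paper first proves the statement for tree-structured $T$ by induction, applying \Cref{thm:algorithm} successively, and then reduces the general (shared) case to the tree case by rewriting $T$ into a tree: it unfolds every shared subtree, which preserves $\sem{\,\cdot\,}{\mathcal{I}_{\mathsf{bu}}^\alpha}$ because each $\mathcal{I}_{\mathsf{bu}}^\alpha(f)$ is deterministic, i.e.\ satisfies the copy-naturality law $\copymap \circ \mathcal{I}(f) = (\mathcal{I}(f)\otimes\mathcal{I}(f))\circ\copymap$. You instead keep the sharing, peel off only the root via $T=\langle\ell\rangle\circ T_0$, and recover the children's values by post-composing with projections, which forces you to prove a garbage-collection lemma justified by naturality of $\delmap$ in $\mathsf{Functions}_R$. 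The two arguments are dual in an instructive way: both exploit that $\mathsf{Functions}_R$ is \emph{cartesian} and not merely a channel category, the paper through naturality of $\copymap$, you through naturality of $\delmap$ (neither is a channel-category axiom, so neither argument would survive in, say, $\mathsf{BoolStoch}_R$ --- which is exactly why the propositional metric differs). The paper's route buys a clean structural induction at the cost of a global rewriting of $T$; yours is local at the root, exhibits the recursion directly, and makes explicit where the bottom-up semantics discards cross-subtree dependence. The one place you should add detail is the garbage-collection lemma itself: it is not immediate from ``unreachable nodes never feed into the output,'' and should be proved by induction on the number of unreachable nodes, repeatedly removing a parentless non-output node $w$ using $\delmap\circ\mathcal{I}(\labelfun(w))=\delmap^{\otimes\arity(\labelfun(w))}$ together with the counit law $(\delmap\otimes\id_1)\circ\copymap=\id_1$ to absorb the resulting discards on $w$'s children. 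With that lemma spelled out, your proof is complete and at least as rigorous as the paper's.
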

\begin{proof}
    When $T$ is tree-structured (i.e.~every node has a unique parent node), 
    the claim follows directly by induction on the tree structure of $T$,  applying Theorem \ref{thm:algorithm} successively.
    For the general case, we may rewrite $T$ into a tree-structured attack tree 
    in a way that preserves the semantics under $\mathcal{I}_{\mathsf{bu}}^\alpha$, since 
    the interpretation of all function symbols is \emph{deterministic},
    i.e.~all function symbols $f$ satisfy \\
    \begin{equation*}
        \tikzfig{eliminate_sharing}
    \end{equation*}
    in the semantics under $\mathcal{I}_{\mathsf{bu}}^\alpha$.
\end{proof}

\begin{example}
In the \emph{min cost} metric, semiring addition is $\min$, and semiring multiplication is $+$. Hence, as in Section \ref{sec:example}, an AND-gate adds the costs of its children, since all need to be activated; and an OR-gate selects the minimum cost of its children, as only one needs to be activated.
\end{example}

\subsection{The stochastic matrix interpretation of an attack tree}

In order to treat the case of propositional semiring metrics 
and fault tree unreliability in a uniform way, 
we first introduce the \emph{stochastic matrix interpretation}. 
This makes use of the notion of a \emph{Boolean-indexed stochastic matrix} over a semiring.
Intuitively, a Boolean-indexed stochastic matrix can be understood as a (generalised) conditional probability table, 
describing the probabilities (or cost, time, etc.) of transitioning from one vector of Booleans to another, possibly of a different size.

\begin{definition}
    Let $R$ be a semiring. 
    A \emph{Boolean-indexed matrix} over $R$ is a $2^j\times 2^i$-matrix over $R$ for some $i, j\in \mathbb{Z}_{\geq 0}$. 
    Here, we write $2=\{0,1\}$ for the set of Booleans, 
    and thus view such matrices as indexed over vectors of Booleans.
    In particular, vectors $v\in R^2$ will be indexed as $(v_0,v_1)^\top$,
    and hence, \textbf{in the following, $v_1$ will always denote the \emph{second} component of $v$.}

    A $2^j\times 2^i$-matrix $A$ over $R$ is \emph{stochastic} if for all $x\in 2^i$,
    \begin{equation}\label{eq:stochastic}
        \sum_{y\in 2^j} A_{yx} = 1.
    \end{equation}
\end{definition}

When $R=([0,\infty),+,\cdot,0,1)$ is the semiring of nonnegative real numbers with their ordinary addition and multiplication, and $A$ is square, 
this recovers the usual notion of a column-stochastic matrix. 

The entry $A_{yx} \in R$ can be thought of as the probability of -- or cost, or time, etc., required to -- transition from the ``state'' $x$ to the ``state'' $y$, with both of these ``states`` being vectors of Booleans of possibly different sizes.

Whenever we write a Boolean-indexed stochastic matrix $P$ in tabular form, as in
\[
P = \begin{pmatrix}
    0.7 & 0.1 & 0.2 & 0.6 \\
    0.3 & 0.9 & 0.8 & 0.4
\end{pmatrix} \in [0,\infty)^{2^1\times2^2},
\]
we order the indices of $P$ according to their binary-integer value. For example, the entry $P_{(1),\,(1\,0)} = 0.8$ is the entry in the second row and third column, and represents the probability that the input $(0 \;1)$ transitions into the output $1$. 

The stochastic matrix interpretation now takes values in the following channel category.

\begin{definition}[$\mathsf{BoolStoch}_R$]\label{def:bool-stoch}
    Let $R$ be a semiring. 
    The channel category $\mathsf{BoolStoch}_R$ of Boolean-indexed stochastic matrices over $R$ is given as follows.
    For each $i,j\in \mathbb{Z}_{\geq 0}$, the set of channels $\mathsf{BoolStoch}_R(i,j)$ is the set of stochastic $2^j\times 2^i$-matrices over $R$.
    The sequential composition of stochastic matrices is given by matrix multiplication,
    while the parallel composition is given by the Kronecker product of matrices.
    The identity channel $\mathsf{id}_i$ is the $2^i\times 2^i$-identity matrix,
    and the copy and delete gates are given by the following matrices:
        $$ 
        \copymap := \begin{pmatrix}
            1 & 0 \\
            0 & 0 \\ 
            0 & 0 \\
            0 & 1
        \end{pmatrix}, \quad
        \delmap := \begin{pmatrix}
            1 & 1 
        \end{pmatrix}.
        $$
    Finally, the swap gate $\mathsf{swap}_{i,j}$ is given by the permutation matrix that swaps the first $i$ coordinates with the last $j$ ones.
\end{definition}

\begin{remark} The requirement \eqref{eq:stochastic} of working with stochastic matrices ensures that $\mathsf{BoolStoch}_R$, 
like $\mathsf{Functions}_R$, 
is a \emph{Markov category} \cite{fritz2020synthetic}.
This means that the only morphism with no output is
the $\mathsf{del}$ channel (``discard'').
\end{remark}

\begin{definition}[Stochastic matrix interpretation]
    Let $B$ be a set, let $R$ be a semiring, and let $\alpha = (\alpha_1, \alpha_0)$ 
    be a pair of functions $B \to R$ such that $\alpha_0(b) + \alpha_1(b) = 1$ for all $b\in B$.
    The \emph{stochastic matrix interpretation} $\mathcal{I}_{\mathsf{stoch}}^\alpha$ is defined as follows:
        $$ \mathcal{I}_{\mathsf{stoch}}^\alpha\colon \mathsf{AT}(B) \to \mathsf{BoolStoch}_R, $$
        $$ \mathcal{I}_{\mathsf{stoch}}^\alpha(\AND_i) := \begin{pmatrix}
            1 & 1 & \cdots & 1 & 0 \\
            0 & 0 & \cdots & 0 & 1 
        \end{pmatrix} \in R^{2 \times 2^i}, $$
        $$ \mathcal{I}_{\mathsf{stoch}}^\alpha(\OR_i) := \begin{pmatrix}
            1 & 0 & \cdots & 0 & 0 \\
            0 & 1 & \cdots & 1 & 1
        \end{pmatrix} \in R^{2 \times 2^i}, $$
        $$ \mathcal{I}_{\mathsf{stoch}}^\alpha(b) := \begin{pmatrix}
            \alpha_0(b) \\
            \alpha_1(b)
        \end{pmatrix}. $$
\end{definition}

To relate this interpretation to propositional semiring metrics and fault tree unreliability,
we first need to introduce some terms used to define the latter.

First, a \emph{basic attack step} in an attack tree component $T$ over a set $B$ of basic attack step labels is simply any node labelled with an element of $B$.

\begin{definition}[Basic attack step]
    Let $T$ be an attack tree component over a set $B$. 
    A basic attack step is a node $n\in N^T$ such that 
    $\labelfun^T(n)\in B$. 
    We write $\mathsf{BAS}(T)$ for the set of basic attack steps of $T$.
\end{definition}

Next, an \emph{attack} indicates for each basic attack step whether it is performed or not.

\begin{definition}[Attack]
    Let $T$ be an attack tree component.
    An \emph{attack} is a function $a\colon \mathsf{BAS}(T) \to \{0,1\}$ from the set of basic attack steps to the set of Booleans.
    We write $\mathsf{Attacks}(T)$ for the set of all attacks.
\end{definition}

Now, the \emph{structure function} of an attack tree determines for each attack whether it is successful or not. 
More generally, the structure function of an attack tree component determines the outcome of each output, for each attack and each additional input.

\begin{definition}[Structure function]
    Let $T: i \to j$ be an attack tree component.
    The \emph{structure function} of $T$ is the function 
    \begin{align*}
        &S_T\colon \mathsf{Attacks}(T) \times \{0,1\}^i \to \{0,1\}^j, \\ 
        &\;\;S_T(a, x) := \left(S_T(a, x;\,\outputs^T_k)\right)_{k=1}^j,
    \end{align*}
    where for all $n\in N^T$, $S_T(a, x;\,n)$ is defined recursively as 
    \[
        S_T(a, x;\,n) := \begin{cases}
            x_k   \quad\quad\quad\qquad \text{if } n = \inputs_k^T, \\
            a(n)  \quad\quad\quad\,\quad \text{if } n \in \mathsf{BAS}(T), \\
            \bigwedge_{m \in \ch^T(n)} S_T(a, x;\,m) \\ 
            \quad\quad\quad\qquad \,\quad \text{if } \labelfun^T(n) = \AND, \\
            \bigvee_{m \in \ch^T(n)} S_T(a, x;\,m) \\
            \quad\quad\quad\qquad \,\quad \text{if } \labelfun^T(n) = \OR.
        \end{cases}
    \]
\end{definition}

\noindent When $T:0 \to 1$ is an attack tree, it has no input nodes, and the structure function is a simply function 
    $$S_T: \mathsf{Attacks}(T) \to \{0,1\}.$$

\begin{example}
Consider the AT $T$ of Example \ref{ex:decomposition-example}. Let $n_{\mathtt{D}}$, $n_{\mathtt{F}}$, $n_{\mathtt{S}}$ be the nodes with labels $\mathtt{D}$, $\mathtt{F}$, $\mathtt{S}$, respectively. An attack on $T$ is a map $\{n_{\mathtt{D}}, n_{\mathtt{F}}, n_{\mathtt{S}}\} \rightarrow \{0,1\}$, which we may view as a three-dimensional binary vector $\vec{x} = (x_{\mathtt{D}}, x_{\mathtt{F}}, x_{\mathtt{S}})$.
In Example \ref{ex:semantics-example}, we saw that $S_T(101) = 1$. Using the same decomposition, but keeping the basic attack steps as variables, we find
\[
S_T(\vec{x}) = (x_{\mathtt{D}} \lor x_{\mathtt{F}}) \land (x_{\mathtt{D}} \lor x_{\mathtt{S}}).
\]
\end{example}

To keep the subsequent formulas concise, we will need one additional notational convenience.

\begin{notation}
    Let $T$ be an attack tree component over a set $B$ and let $\alpha: B \to R$ be an attribution (i.e.~a function from $B$ to some semiring $R$). We define 
        $$ \overline{\alpha}: \mathsf{BAS}(T) \to R, \quad \overline{\alpha}_i(v) := \alpha_i(\labelfun^T(v)).$$
\end{notation}

We are now ready to state an explicit formula for the stochastic matrix interpretation in terms of the notions introduced above. We will use this result to show that both the propositional semiring metrics of \cite{lopuhaa2022efficient} and the fault tree unreliability \cite{ruijters2015fault} can be viewed as semantics in a channel category. 
A proof is given in \Cref{sec:proof-matrix-interpretation-formula}.

\begin{theorem}\label{thm:matrix-semantics-formula}
    Let $T:i \to j$ be an attack tree component over a set $B$.
    Moreover, let $R$ be a semiring, and let $\alpha = (\alpha_0, \alpha_1)$ 
    be a pair of two functions $B \to R$ such that $\alpha_0(b) + \alpha_1(b) = 1$ for all $b\in B$.
    Then
    \[
    \sem{T}{\mathcal{I}_{\mathsf{stoch}}^\alpha} = (M^T_{yx})_{y\in 2^j, \,x\in 2^i},
    \]
    where
    \[
    M^T_{yx} = \!\!\sum_{\substack{a\in \mathsf{Attacks}(T) \\S_T(a, x) = y}} 
    \left(\prod_{ a(v) = 1} \overline{\alpha}_1(v)\right) \cdot 
    \left(\prod_{a(v) = 0} \overline{\alpha}_0(v)\right).
    \]
\end{theorem}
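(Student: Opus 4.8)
The plan is to prove the formula by induction along the decomposition of $T$ into atomic term graphs (\Cref{thm:decomposition-into-atomic-term-graphs}), exploiting that the semantics functor $\sem{\,\cdot\,}{\mathcal{I}_{\mathsf{stoch}}^\alpha}$ preserves $\circ$ and $\otimes$. Write $\tilde{M}(T) := (M^T_{yx})_{y,x}$ for the matrix defined by the claimed right-hand side. By \Cref{thm:algorithm} we have $\sem{T}{\mathcal{I}_{\mathsf{stoch}}^\alpha} = \mathop{\bigcirc}_{i}\bigotimes_{j}\sem{A_{i,j}}{\mathcal{I}_{\mathsf{stoch}}^\alpha}$, so it suffices to show that (i) $\tilde M$ agrees with $\sem{\,\cdot\,}{\mathcal{I}_{\mathsf{stoch}}^\alpha}$ on every atomic term graph, and (ii) $\tilde M$ is itself compatible with $\circ$ (matrix multiplication) and $\otimes$ (Kronecker product). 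The claim for arbitrary $T$ then follows by matching the two decompositions factor by factor.

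For (i) I would evaluate the right-hand side directly on each atom. The structural atoms $\id_0,\id_1,\copymap,\delmap,\swap$ contain no basic attack steps, so $\mathsf{Attacks}$ is a singleton, every weight-product is empty (hence $1$), and $M^T_{yx}$ collapses to the indicator $[\,S_T(x)=y\,]$; a short check reproduces the identity, copy, delete and swap matrices of \Cref{def:bool-stoch}. For $\langle\AND_i\rangle$ and $\langle\OR_i\rangle$ there are again no basic attack steps, and the unique structure function computes the conjunction resp.\ disjunction, matching $\mathcal{I}_{\mathsf{stoch}}^\alpha(\AND_i)$ and $\mathcal{I}_{\mathsf{stoch}}^\alpha(\OR_i)$. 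For $\langle b\rangle$ there is a single node, which is a basic attack step, so the two attacks $a(v)=0,1$ contribute $\alpha_0(b),\alpha_1(b)$ to the entries $M_{(0)},M_{(1)}$, reproducing $\mathcal{I}_{\mathsf{stoch}}^\alpha(b)$.

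For (ii) the guiding observation is that basic attack steps are preserved by both compositions, $\mathsf{BAS}(T\circ S)=\mathsf{BAS}(S)\sqcup\mathsf{BAS}(T)$ and likewise for $\otimes$, that labels are inherited from the factors, and consequently that every attack on a composite restricts uniquely to a pair $(a_S,a_T)$ and its weight-product factorises as the product of the factor weights. For parallel composition, inputs and outputs split as $(x_S,x_T)$ and $(y_S,y_T)$, the structure function factors as $S_{T\otimes S}(a,(x_S,x_T))=(S_S(a_S,x_S),S_T(a_T,x_T))$, and hence $M^{T\otimes S}_{(y_S y_T),(x_S x_T)}=M^S_{y_S x_S}\,M^T_{y_T x_T}$, which is exactly the Kronecker product once the binary-integer index orderings are aligned. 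For sequential composition the crucial lemma is that gluing the outputs of $S$ to the inputs of $T$ makes the structure functions compose: with $y:=S_S(a_S,x)$ one has $S_{T\circ S}(a,x)=S_T(a_T,y)$. Summing over the intermediate state $y\in 2^{j}$ then turns $\tilde M(T\circ S)$ into the matrix product $\tilde M(T)\,\tilde M(S)$, matching $\sem{T}{\mathcal{I}_{\mathsf{stoch}}^\alpha}\circ\sem{S}{\mathcal{I}_{\mathsf{stoch}}^\alpha}$.

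I expect the main obstacle to be precisely this structure-function composition lemma for the sequential case. The difficulty is bookkeeping: in $T\circ S$ the $l$-th output node of $S$ is identified with the $l$-th input node of $T$, and one must verify that the recursion defining $S_{T\circ S}$, upon descending from a $T$-node into such a glued node, correctly resumes inside $S$ (with the subtlety that an output node of $S$ may itself be an input node of $S$, or a basic attack step). This is best handled by a secondary induction on nodes, proving $S_{T\circ S}(a,x;\,n)=S_T(a_T,y;\,n)$ for nodes $n$ originating in $T$ and $S_{T\circ S}(a,x;\,n)=S_S(a_S,x;\,n)$ for nodes originating in $S$. Everything else is routine, including the verification that each $\tilde M(T)$ is stochastic: summing $M^T_{yx}$ over $y$ collapses the constraint $S_T(a,x)=y$ and yields $\prod_{v}\bigl(\overline{\alpha}_0(v)+\overline{\alpha}_1(v)\bigr)=1$ by the hypothesis $\alpha_0+\alpha_1=1$. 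Alternatively, steps (i)--(ii) can be packaged as the statement that $\tilde M$ is a functor of channel categories agreeing with $\mathcal{I}_{\mathsf{stoch}}^\alpha$ on generators, whence equality follows from the uniqueness clause of \Cref{thm:term-graphs-are-free-channel-category}.
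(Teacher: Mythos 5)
Your proposal is correct and follows essentially the same route as the paper's proof: establishing that $T\mapsto M^T$ is a functor of channel categories (compatibility with $\circ$, $\otimes$, and the wiring channels, via the factorisation of attacks, weights, and structure functions over composites) and that it agrees with $\mathcal{I}_{\mathsf{stoch}}^\alpha$ on the generators $\langle f\rangle$, then concluding by the uniqueness clause of \Cref{thm:term-graphs-are-free-channel-category} --- exactly the packaging you note in your final sentence. The structure-function composition lemma you single out as the main obstacle is indeed the step the paper asserts without detailed bookkeeping, and your proposed secondary induction on nodes is the right way to discharge it.
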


When $T$ is an attack tree, the statement of \Cref{thm:matrix-semantics-formula} can be simplified, using the following notation for the set of all successful attacks.

\begin{definition}[Successful attacks]
    Let $T$ be an attack tree.
    The set of all successful attacks is denoted by 
        $$ \mathsf{Suc}_T := S_T^{-1}(\{1\}) = \{a \in \mathsf{Attacks}(T) \mid S_T(a) = 1\}. $$
\end{definition}

The following is now a direct consequence of \Cref{thm:matrix-semantics-formula}.

\begin{corollary}\label{thm:main-result}
    Let $T$ be an attack tree over a set $B$. As before, let $R$ be a semiring, and let $\alpha = (\alpha_1, \alpha_0)$ 
    be a pair of two functions $B \to R$ such that $\alpha_0(b) + \alpha_1(b) = 1$ for all $b\in B$.
    Then 
        $$ \left(\sem{T}{\mathcal{I}_{\mathsf{stoch}}^\alpha}\right)_1 = \sum_{a\in \mathsf{Suc}_T} \left(\prod_{a(v) = 1} \overline{\alpha}_1(v)\right) \cdot \left(\prod_{a(v) = 0} \overline{\alpha}_0(v)\right) $$
    where the left-hand side is the \emph{second} component of the stochastic matrix semantics of $T$ (a vector in $R^2$).
\end{corollary}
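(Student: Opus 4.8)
The plan is to obtain this as a direct specialisation of \Cref{thm:matrix-semantics-formula} to the case $i = 0$, $j = 1$, followed by reading off a single matrix entry. First I would instantiate the theorem with these values. Since $2^0 = \{0,1\}^0$ has exactly one element, namely the empty Boolean vector (which I will call $\varepsilon$), the semantics $\sem{T}{\mathcal{I}_{\mathsf{stoch}}^\alpha}$ is a $2^1 \times 2^0 = 2 \times 1$ matrix, i.e.~a column vector in $R^2$ indexed by $y \in \{0,1\}$. This already matches the claim that the left-hand side is a vector in $R^2$, and locating the relevant entry is the whole content of the argument.

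Next I would extract the second component. Following the indexing convention that vectors in $R^2$ are written $(v_0, v_1)^\top$ with $v_1$ the \emph{second} component, we have $\left(\sem{T}{\mathcal{I}_{\mathsf{stoch}}^\alpha}\right)_1 = M^T_{1,\varepsilon}$, so by the formula of \Cref{thm:matrix-semantics-formula},
\[
M^T_{1,\varepsilon} = \!\!\sum_{\substack{a\in \mathsf{Attacks}(T) \\ S_T(a,\varepsilon) = 1}} \left(\prod_{a(v)=1} \overline{\alpha}_1(v)\right) \cdot \left(\prod_{a(v)=0} \overline{\alpha}_0(v)\right).
\]
Then I would simplify the summation condition using the observation recorded just after the definition of the structure function: because an attack tree has no input nodes, $S_T$ reduces to the input-free function $S_T\colon \mathsf{Attacks}(T) \to \{0,1\}$, so $S_T(a,\varepsilon) = S_T(a)$. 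By the definition $\mathsf{Suc}_T := S_T^{-1}(\{1\})$, the constraint $S_T(a) = 1$ is exactly $a \in \mathsf{Suc}_T$. Substituting this index set into the sum yields precisely the right-hand side of the corollary.

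The main (and essentially only) subtlety will be bookkeeping around the degenerate index set $2^0$ and the component convention: one must be careful that $2^0$ contributes a single column rather than vanishing, and that ``second component'' corresponds to the Boolean value $y = 1$ under the $(v_0, v_1)^\top$ ordering. No genuine difficulty arises, since once the dictionary $S_T(a,\varepsilon) = 1 \iff a \in \mathsf{Suc}_T$ is in place, the corollary is a verbatim restatement of the specialised theorem.
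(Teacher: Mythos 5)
Your proposal is correct and follows exactly the route the paper intends: the paper presents \Cref{thm:main-result} as a direct specialisation of \Cref{thm:matrix-semantics-formula} to $i=0$, $j=1$, with the identification $S_T(a,\varepsilon)=S_T(a)$ and $\mathsf{Suc}_T = S_T^{-1}(\{1\})$ doing the only work. Your careful bookkeeping of the degenerate column index and the $(v_0,v_1)^\top$ convention is exactly what the paper leaves implicit.
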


\begin{example}
Consider again the minimal cost semiring $([0,\infty],\min,+,\infty,0)$. The restriction on $\alpha$ is now that $\min(\alpha_0(b),\alpha_1(b)) = 0$ for all $b$. We take $\alpha_0(b) = 0$, and $\alpha_1(b)$ to be the actual cost of any basic attack step labelled $b$: then \Cref{thm:main-result} tells us that
\begin{align*}
\left(\sem{T}{\mathcal{I}_{\mathsf{stoch}}^\alpha}\right)_1 &= \min_{a\in \mathsf{Suc}_T} \left(\sum_{a(v) = 1} \overline{\alpha}_1(v) + \sum_{a(v) = 0} \overline{\alpha}_0(v)\right)\\
&= \min_{a\in \mathsf{Suc}_T} \sum_{a(v) = 1} \overline{\alpha}_1(v).
\end{align*}
This is indeed \emph{min cost} as defined in \cite{lopuhaa2022efficient}. The next subsection applies this approach to general semiring metrics.
\end{example}

\subsection{Propositional semantics and metrics}

Using \Cref{thm:main-result},
we now show that the propositional semiring metrics of \cite{lopuhaa2022efficient}
are given by the semantics under a certain interpretation,
the \emph{propositional interpretation}. 
Hence, in particular, these semantics extend to arbitrary attack tree components,
preserving the compositional structure of attack trees. In this section, we assume all semirings are \emph{absorbing}.

The propositional interpretation is now defined as follows:

\begin{definition}
    Let $B$ be a set, let $R$ be an absorbing semiring,
    and let $\alpha\colon B\to R$ be an attribution.
    The \emph{($\alpha$-weighted) propositional interpretation} $\mathcal{I}_{\mathsf{prop}}^\alpha$ is
        $$ \mathcal{I}_{\mathsf{prop}}^\alpha := \mathcal{I}_{\mathsf{stoch}}^{\alpha'}, $$
    where $\alpha' = (\alpha_0', \alpha_1')$ is the pair of functions $B\to R$ given by:
        $$ \alpha_0'(b) := \alpha(b), \;\; \alpha_1'(b) := 1. \qquad (b \in B)$$
\end{definition}

The definition of propositional semiring metrics depends
on the notion of \emph{minimal successful attacks} of an AT $T$.

\begin{definition}\label{def:minimal-successful-attack-suites}
    Let $T$ be an attack tree over a set $B$.
    Identifying attacks $B\to \{0,1\}$ with subsets of $B$, the set of \emph{minimal successful attacks} of $T$ is defined as
        $$ \mathsf{MinSuc}_T := \{a\in \mathsf{Suc}_T \mid \not\exists a' \in \mathsf{Suc}_T: a' \subsetneq a \}. $$
    In other words, $\mathsf{MinSuc}_T$ is the set of all successful attacks that are minimal in the sense that 
    no other successful attack is a subset of them.
\end{definition}

As we will show later, the set of minimal successful attacks can also be obtained as the semantics of $T$ under a certain interpretation.

The following corollary of \Cref{thm:main-result} provides an explicit expression for the semantics of an attack tree $T$ under the propositional interpretation $\mathcal{I}_{\mathsf{prop}}^\alpha$,
coinciding precisely with the definition of propositional semiring metrics in \cite{lopuhaa2022efficient}.

\begin{corollary}\label{cor:prop-semring-metrics-formula}
    Let $B$ be a set, let $R$ be an absorbing semiring, 
    and let $\alpha: B\to R$ be an attribution.
    Then for any attack tree $T$, we have:
        $$ \left(\sem{T}{\mathcal{I}_{\mathsf{prop}}^\alpha}\right)_1 = \sum_{a \in \mathsf{MinSuc}_T} \prod_{a(v) = 1} \overline{\alpha}(v). $$
\end{corollary}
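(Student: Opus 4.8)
The plan is to derive Corollary~\ref{cor:prop-semring-metrics-formula} from \Cref{thm:main-result} by starting from the general formula it provides and simplifying it using the two defining features of the propositional interpretation: the special attribution $\alpha'$ with $\alpha_1'(b) = 1$ and $\alpha_0'(b) = \alpha(b)$, and the absorbing property of the semiring. First I would instantiate \Cref{thm:main-result} with $\alpha = \alpha'$, obtaining
\begin{equation*}
    \left(\sem{T}{\mathcal{I}_{\mathsf{prop}}^\alpha}\right)_1 = \sum_{a\in \mathsf{Suc}_T} \left(\prod_{a(v)=1} \overline{\alpha}_1'(v)\right)\cdot\left(\prod_{a(v)=0}\overline{\alpha}_0'(v)\right).
\end{equation*}
Since $\overline{\alpha}_1'(v) = 1$ for every $v$, the first product collapses to the semiring unit $1$, so each summand reduces to $\prod_{a(v)=0} \overline{\alpha}(v)$, a product ranging over the basic attack steps \emph{not} performed in $a$. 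At this stage the sum still ranges over \emph{all} successful attacks, not just the minimal ones, so the remaining work is to show that passing from $\mathsf{Suc}_T$ to $\mathsf{MinSuc}_T$ (and simultaneously switching the product to range over the $v$ with $a(v)=1$) leaves the value unchanged.

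The key step, and the main obstacle, is reconciling the two apparent discrepancies between the instantiated formula and the target expression: the summation index ($\mathsf{Suc}_T$ versus $\mathsf{MinSuc}_T$) and the product index (over $a(v)=0$ versus $a(v)=1$). I suspect the intended reading identifies attacks with subsets of $B$ as in \Cref{def:minimal-successful-attack-suites}, so that a minimal successful attack corresponds to a minimal \emph{set} of performed steps; the product $\prod_{a(v)=1}\overline{\alpha}(v)$ in the statement then weights exactly the performed steps of a minimal attack. The plan is to prove that
\begin{equation*}
    \sum_{a\in\mathsf{Suc}_T} \prod_{a(v)=0} \overline{\alpha}(v) = \sum_{a\in\mathsf{MinSuc}_T} \prod_{a(v)=1}\overline{\alpha}(v)
\end{equation*}
using the absorbing law $r + r\cdot s = r$. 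The absorbing property is precisely what allows non-minimal successful attacks to be discarded: if $a'\subsetneq a$ are both successful, then the term for $a$ should be absorbed into the term for $a'$. I would make this rigorous by showing that for a fixed minimal successful attack $m$, the subsum over all successful $a \supseteq m$ (those whose performed steps contain those of $m$) telescopes via repeated application of absorption down to the single minimal term, and that every successful attack lies above some minimal one.

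The careful bookkeeping needed is matching up which product convention the formula uses, so I would double-check consistency against the worked min cost example preceding the corollary, where $\alpha_0(b)=0$ forces the $a(v)=0$ product to vanish under the $(\min,+)$-semiring and leaves exactly $\min_{a\in\mathsf{Suc}_T}\sum_{a(v)=1}\overline{\alpha}_1(v)$; this confirms that for absorbing semirings the effective sum is governed by the minimal attacks. The cleanest route is probably to first establish the reduction to $\mathsf{MinSuc}_T$ abstractly as a lemma about absorbing semirings and downward-closed families of terms, then apply it, rather than to manipulate the two product conventions simultaneously. Once the absorption argument is in place, the remaining identification of products is a matter of unwinding the subset-attack correspondence, which I expect to be routine.
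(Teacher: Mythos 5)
Your overall route is the paper's: instantiate \Cref{thm:main-result} with the attribution $\alpha'$ defining $\mathcal{I}_{\mathsf{prop}}^\alpha$, then use absorption to pass from $\mathsf{Suc}_T$ to $\mathsf{MinSuc}_T$. However, the bridge you plan to build in the middle does not exist. Taking the paper's definition of $\alpha'$ literally ($\alpha_1'=1$, $\alpha_0'=\alpha$) you correctly arrive at $\sum_{a\in\mathsf{Suc}_T}\prod_{a(v)=0}\overline{\alpha}(v)$, and you then propose to prove
\[
\sum_{a\in\mathsf{Suc}_T}\prod_{a(v)=0}\overline{\alpha}(v)
=\sum_{a\in\mathsf{MinSuc}_T}\prod_{a(v)=1}\overline{\alpha}(v)
\]
by absorption. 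This identity is false. Absorption kills the term of the \emph{larger} of two comparable attacks only when that term is a multiple of the smaller attack's term; with the product taken over the steps \emph{not} performed, enlarging an attack removes factors, so it is the smaller attack's term that is a multiple of the larger one's, and the sum collapses towards \emph{maximal} successful attacks instead. Concretely, the all-ones attack is successful whenever $\mathsf{Suc}_T\neq\emptyset$ and contributes the empty product $1$, and $1+s=1$ in any absorbing semiring, so the left-hand side is identically the unit $1$; for a single basic attack step $b$ in the min-cost semiring this is $0$, while the right-hand side equals $\alpha(b)$, the cost of $b$. No amount of telescoping fixes this.

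The actual resolution is that the subscripts in the paper's definition of $\mathcal{I}_{\mathsf{prop}}^\alpha$ are swapped relative to what \Cref{cor:prop-semring-metrics-formula} (and the min-cost example just before it, which takes $\alpha_0$ to be the unit and $\alpha_1$ to be the cost) requires: one needs $\alpha_1'(b)=\alpha(b)$ and $\alpha_0'(b)=1$. With that convention, \Cref{thm:main-result} directly yields $\sum_{a\in\mathsf{Suc}_T}\prod_{a(v)=1}\overline{\alpha}(v)$, which is exactly the starting point of the paper's own two-line proof, and your absorption argument (the term of $a\supsetneq a'$ is absorbed into the term of $a'$; every successful attack contains a minimal one by finiteness) then goes through verbatim, because enlarging an attack now adds factors to the product. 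So your plan is sound once the attribution is corrected, but as written the key lemma you intend to prove is unprovable.
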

\begin{proof}
    By the definition of $\mathcal{I}_{\mathsf{prop}}^\alpha$ and \Cref{thm:main-result},
    \begin{equation}\label{eq:prop-semantics-formula}
        \left(\sem{T}{\mathcal{I}_{\mathsf{prop}}^\alpha}\right)_1 = \sum_{a\in \mathsf{Suc}_T} \left(\prod_{a(v) = 1} \overline{\alpha}_1(v)\right).
    \end{equation}
    Since $R$ is absorbing, adding the product of some superset $a' \supseteq a$ of $a\in \mathsf{Suc}_T$ does not change the sum, 
    and hence, replacing $\mathsf{Suc}_T$ by $\mathsf{MinSuc}_T$ in \Cref{eq:prop-semantics-formula} we obtain the desired identity.
\end{proof}

Theorem \ref{thm:algorithm} now gives an algorithm to compute propositional semiring metrics, through repeated multiplication and Kronecker multiplication of matrices. 
This is completely different from existing algorithms based on binary decision diagrams \cite{lopuhaa2022efficient} and clone deletion \cite{kordy2014dag}. 
Since the size of the involved matrices are exponential in the number of inputs/outputs, to implement this effectively the decomposition of Theorem \ref{thm:decomposition-into-atomic-term-graphs} should be chosen such that the number of inputs/outputs at each composition is minimal. 
Optimising this is beyond the scope of this paper.

\subsection{Qualitative attack tree semantics}

In addition to attack tree metrics, qualitative semantics have also been proposed.
We will discuss two such semantics, the \emph{propositional semantics} \cite{lopuhaa2022efficient}
given by the set of minimal attacks, and the \emph{multiset semantics} \cite{mauw2005foundations}.
These are simply referred to as ``the semantics of an attack tree'' in these works.
In contrast, we do not make a choice as to which qualitative semantics is the default, 
and also consider attack tree \emph{metrics} as a type of (quantitative) semantics.

\subsubsection{Minimal successful attacks}

The set of minimal successful attacks $\mathsf{MinSuc}_T$ of an attack tree $T$ (see \Cref{def:minimal-successful-attack-suites})
can be realised as the semantics of $T$ under the interpretation $\mathcal{I}_{\mathsf{prop}}^\alpha$, 
for a specific semiring $\mathsf{AC}(T)$ which we now construct.

For any attack tree $T$ over a set $B$, we consider elements $a\colon \mathsf{BAS}(T) \rightarrow \{0,1\}$ of $\mathsf{Attacks}(T)$ as subsets of $\mathsf{BAS}(T)$, by considering them as characteristic functions. We then let 
    $$\mathsf{AC}(T) := \{A\subseteq \mathsf{Attacks}(T) \mid \forall a,a'\in A: a\not\subseteq a' \land a' \not\subseteq a\} $$
be the set of antichains in $\mathsf{Attacks}(T)$.
The set $\mathsf{AC}(T)$ becomes a semiring under the following operations:
    $$ A + A' := (A\cup A')^{\mathsf{AC}}, \;\; A \cdot A' := (\{a \cup a' \mid a \in A, a' \in A'\})^{\mathsf{AC}}, $$
where
    $$ 
    S^{\mathsf{AC}} := \{a \in S \mid \forall a'\in S: a' \subsetneq a\} 
    $$
associates to $S$ its antichain of elements that are minimal with respect to $\subseteq$.
The additive and multiplicative units of $\mathsf{AC}(T)$ are given by the empty set $\emptyset$ and the singleton set $\{\emptyset\}$, respectively.
Moreover, $\mathsf{AC}(T)$ is absorbing.
(In fact, for finite $B$, $\mathsf{AC}(T)$ is the \emph{free distributive lattice} on $\mathsf{BAS}(T)$.) \\
The following theorem now shows that the set of minimal attacks for an attack tree $T$
is equivalently given by the semantics of $T$ 
under the propositional interpretation with respect to a certain attribution in $\mathsf{AC}(T)$. 
The proof is an application of \Cref{cor:prop-semring-metrics-formula} and given in \Cref{sec:proof-min-suc-semantics}.

\begin{theorem}\label{thm:min_suc_semantics}
    Let $T$ be an attack tree over a set $B$, 
    and assume that there is no distinction between basic attack steps and their labels, in the sense that $B=\mathsf{BAS}(T)$ and $\labelfun^T(v)=v$ for all $v\in \mathsf{BAS}(T)$. 
    Let 
        $$ \alpha\colon B \to \mathsf{AC}(T), \;\; b \mapsto \{\{b\}\}, $$
    be the attribution that assigns to each basic attack step $b$ the singleton antichain $\{\{b\}\}$.
    Then 
        $$ \mathsf{MinSuc}_T = \left(\sem{T}{\mathcal{I}_{\mathsf{prop}}^\alpha}\right)_1. $$
\end{theorem}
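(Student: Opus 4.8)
The plan is to apply \Cref{cor:prop-semring-metrics-formula} directly with the absorbing semiring $R=\mathsf{AC}(T)$ (recall that $\mathsf{AC}(T)$ was shown to be absorbing in its construction). This reduces the claim to evaluating the resulting sum-of-products expression inside $\mathsf{AC}(T)$ and checking that it equals $\mathsf{MinSuc}_T$. By the corollary,
$$\left(\sem{T}{\mathcal{I}_{\mathsf{prop}}^\alpha}\right)_1 = \sum_{a \in \mathsf{MinSuc}_T} \prod_{a(v)=1} \overline{\alpha}(v),$$
so everything comes down to understanding this right-hand side in terms of the operations $+$ and $\cdot$ on antichains.

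First I would compute the inner products. Under the hypothesis $\labelfun^T(v)=v$, we have $\overline{\alpha}(v) = \alpha(v) = \{\{v\}\}$. Using the multiplication $A \cdot A' = (\{a\cup a' \mid a\in A, a'\in A'\})^{\mathsf{AC}}$, a short induction on the size of the support $\{v : a(v)=1\}$ shows that a product of singleton antichains telescopes, since at each step we merge a single set with a single set, and the result $\{\{v_1,\dots,v_m\}\}$ is automatically an antichain. Identifying an attack $a$ with its support as a subset of $\mathsf{BAS}(T)=B$, this yields $\prod_{a(v)=1}\overline{\alpha}(v) = \{a\}$. The base case is the empty attack, whose empty product is the multiplicative unit $\{\emptyset\}$, which indeed coincides with $\{a\}$ for $a=\emptyset$.

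Next I would evaluate the outer sum. Using the addition $A + A' = (A\cup A')^{\mathsf{AC}}$ together with its associativity and commutativity, the iterated sum becomes $\sum_{a\in\mathsf{MinSuc}_T}\{a\} = (\mathsf{MinSuc}_T)^{\mathsf{AC}}$, the antichain of $\subseteq$-minimal elements of $\mathsf{MinSuc}_T$. But $\mathsf{MinSuc}_T$ is, by \Cref{def:minimal-successful-attack-suites}, already an antichain, so this closure acts as the identity, and we obtain $\left(\sem{T}{\mathcal{I}_{\mathsf{prop}}^\alpha}\right)_1 = \mathsf{MinSuc}_T$, as desired.

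I expect no serious obstacle: the only points requiring care are the product computation (verifying the telescoping behaviour of singleton antichains and treating the empty attack correctly) and checking that the intermediate antichain closures appearing in the iterated sum never discard an element, which holds precisely because $\mathsf{MinSuc}_T$ is an antichain. The heavy lifting is done by \Cref{cor:prop-semring-metrics-formula}; the remaining work is routine bookkeeping in the free distributive lattice $\mathsf{AC}(T)$.
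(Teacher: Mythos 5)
Your proposal is correct and follows essentially the same route as the paper's own proof: invoke \Cref{cor:prop-semring-metrics-formula} with $R=\mathsf{AC}(T)$, show that each product of singleton antichains collapses to the singleton $\{a\}$, and observe that the resulting sum is $(\mathsf{MinSuc}_T)^{\mathsf{AC}} = \mathsf{MinSuc}_T$ because $\mathsf{MinSuc}_T$ is already an antichain. The only cosmetic difference is that you organise the product computation as an induction on the support size, whereas the paper expands the product formula directly; both are sound.
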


Note that the assumption that there is no distinction between basic attack steps and labels is not restrictive, 
as this is an inherent feature of how attack trees are formalised in context of the propositional semantics as treated in \cite{lopuhaa2022efficient}.

\subsubsection{The multiset of successful attacks}

A different approach to qualitative semantics of attack trees are the \emph{multiset semantics},
which are simply called ``the semantics of an attack tree'' in \cite{mauw2005foundations}.
By definition, these are the bottom-up semantics associated to the following semiring.

For any set $B$ let $\mathsf{M}(2^B)$ be the set of multisets of subsets of $B$.
The set $\mathsf{M}(2^B)$ becomes a semiring under the following operations:
    $$ A + A' := A \cup A', \;\; A \cdot A' := \{a \cup a' \mid a \in A, a' \in A\}. $$
where $\cup$ denotes the union of multisets, adding up multiplicities, 
and the multiset comprehension notation on the right-hand side similarly keeps track of multiplicities.
The additive and multiplicative units of $\mathsf{M}(2^B)$ are given by the empty set $\emptyset$ and the singleton set $\{\emptyset\}$, respectively.

Under the multiset semantics, 
an attack step with multiple parents in the attack tree is interpreted as being performed repeatedly, 
once for each parent. 
In particular, in contrast to the semantics in terms of minimal attacks, 
sharing of attack steps in the attack tree does not affect these semantics. 

\subsection{Fault tree unreliability}

A further important consequence of \Cref{thm:main-result} is the case of fault tree unreliability, 
the probability that the system represented by the fault tree fails.

\begin{definition}
    Let $R := [0, \infty)$ be the semiring of nonnegative reals,
    with its usual addition and multiplication, and let $B$ be a set, 
    whose elements we think of as \emph{basic events}.
    Moreover, let $\alpha\colon B \to R$ be an attribution such that 
    $\alpha(b) \in [0,1]$ for all $b\in B$. 
    (This is usually called a \emph{probabilistic status vector} 
    in the context of fault trees.)
    The \emph{unreliability interpretation} is the following interpretation of the signature $\mathsf{FT}(B) := \mathsf{AT}(B)$ of fault trees over $B$:
    \begin{align*}
    \mathcal{I}_{\mathsf{unrel}}\colon \mathsf{FT}(B) &\to \mathsf{BoolStoch}_R, \\
    \mathcal{I}_{\mathsf{unrel}} &:= \mathcal{I}_{\mathsf{stoch}}^{\alpha'},
    \end{align*}
    where $\alpha' =(\alpha'_0, \alpha'_1)$ is the pair of functions $B\to R$:
        $$ \alpha'_0(b) = 1-\alpha(b), \;\; \alpha'_1(b)=\alpha(b). \qquad (b\in B)$$
\end{definition}
With this definition in place, we obtain:
\begin{corollary}
    Let $T$ be an attack tree (thought of as a fault tree) over some set $B$
    (``basic events''),
    and let $\alpha\colon B\to [0,1]$ (``probabilistic status vector'').
    Moreover, let $(X_v)_{v\in \mathsf{BAS}(T)}$ be independent random variables, 
    each Bernoulli-distributed with parameter $p_v = \overline{\alpha}(v)$.
    Then
        $$ \left(\sem{T}{\mathcal{I}_{\mathsf{unrel}}}\right)_1 = \mathbb{P}\left[S_T\left((X_v)_{v\in \mathsf{BAS}(T)}\right) = 1\right], $$
    where $S_T$ is the structure function of $T$. 
    In other words, the semantics of $T$ under the unreliability interpretation correspond to precisely to the probability that the system $T$ fails.
\end{corollary}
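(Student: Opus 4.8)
The plan is to recognise this corollary as a direct specialisation of \Cref{thm:main-result}, combined with the elementary probabilistic fact that summing the likelihoods of all successful attacks computes the failure probability. First I would check that the unreliability interpretation genuinely falls within the scope of \Cref{thm:main-result}: by definition $\mathcal{I}_{\mathsf{unrel}} = \mathcal{I}_{\mathsf{stoch}}^{\alpha'}$ with $\alpha'_0(b) = 1 - \alpha(b)$ and $\alpha'_1(b) = \alpha(b)$, so that $\alpha'_0(b) + \alpha'_1(b) = 1$ for all $b \in B$, exactly as the hypothesis of \Cref{thm:main-result} requires.

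Next I would invoke \Cref{thm:main-result} to obtain the closed form
\[
\left(\sem{T}{\mathcal{I}_{\mathsf{unrel}}}\right)_1 = \sum_{a\in \mathsf{Suc}_T} \left(\prod_{a(v) = 1} \overline{\alpha'}_1(v)\right)\cdot\left(\prod_{a(v) = 0} \overline{\alpha'}_0(v)\right),
\]
and then unfold the attribution: since $\overline{\alpha'}_1(v) = \alpha'_1(\labelfun^T(v)) = \alpha(\labelfun^T(v)) = p_v$ and $\overline{\alpha'}_0(v) = \alpha'_0(\labelfun^T(v)) = 1 - \alpha(\labelfun^T(v)) = 1 - p_v$, the right-hand side becomes a sum over $\mathsf{Suc}_T$ of products of the Bernoulli parameters $p_v$ and their complements $1 - p_v$.

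The probabilistic side I would handle by the law of total probability. A realisation of the random vector $(X_v)_{v\in \mathsf{BAS}(T)}$ is precisely an element of $\{0,1\}^{\mathsf{BAS}(T)}$, i.e.\ an attack $a\in \mathsf{Attacks}(T)$; by independence of the $X_v$, the probability of a fixed outcome $a$ is $\prod_{a(v)=1} p_v \cdot \prod_{a(v)=0}(1-p_v)$. Since $T$ is an attack tree (no inputs, single output), its structure function is genuinely $\{0,1\}$-valued, and the event $\{S_T((X_v)_v) = 1\}$ is the disjoint union of the events $\{(X_v)_v = a\}$ over $a\in\mathsf{Suc}_T = S_T^{-1}(\{1\})$. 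Summing these likelihoods therefore reproduces exactly the expression obtained above, which completes the identification.

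I do not expect any genuine obstacle here: once \Cref{thm:main-result} is available, the argument is essentially bookkeeping. The one point that warrants care is the identification of realisations of the Bernoulli vector with elements of $\mathsf{Attacks}(T)$, and the accompanying observation that the events indexed by distinct attacks are disjoint and exhaust the sample space, so that restricting to $\mathsf{Suc}_T$ captures precisely the failure event. After that the two sums coincide term by term.
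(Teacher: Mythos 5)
Your proposal is correct and follows essentially the same route as the paper: both specialise \Cref{thm:main-result} to the attribution $\alpha'=(1-\alpha,\alpha)$ and then identify the resulting sum over $\mathsf{Suc}_T$ with the probability of the failure event. You merely spell out the probabilistic bookkeeping (independence, disjointness of the outcome events, identification of realisations with attacks) that the paper's proof leaves implicit.
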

\begin{proof}
    According to \Cref{thm:main-result}, we have:
    $$ \left(\sem{T}{\mathcal{I}_{\mathsf{unrel}}}\right)_1 = \!\!\!\!\sum_{a\in \mathsf{Suc}_T} \!\!\left(\prod_{a(v)=1} \!p_v\right) \cdot \left(\prod_{a(v)=0} \!(1-p_v)\right). $$
    Since, by definition, $\mathsf{Suc}_T = S_T^{-1}(\{1\})$, the expression on the right hand side is precisely the probability 
    we need. 
\end{proof}
Applying Theorem \ref{thm:algorithm} to fault tree unreliability yields exactly the matrix-based analysis of Bayesian network of \cite{jacobs2018channel}; note that Bayesian networks generalise fault trees.

\subsection{Non-examples}\label{sec:non-examples} 
Up to this point, we have shown which attack tree metrics and qualitative semantics arise as special cases of \Cref{def:semantics}.
We now turn to the converse question: which metrics or qualitative semantics do \emph{not} naturally fit into our framework?
Since metrics that preserve the channel category structure of attack tree components must also preserve the coarser modular composition considered in \cite{lopuhaa2024attack}, the non-examples of \emph{operad metrics} given therein also apply in our case. 
A concrete example is given by the \emph{mean time to compromise} \cite{mcqueen2006time}; see \cite{lopuhaa2024attack} for details.

\section{Conclusion}

The core of our results can be summarised in one slogan: 
\emph{attack tree metrics are functors of channel categories}.
This characterisation allows
one to define and analyse attack tree metrics in a modular manner, 
by how they behave on the basic building blocks of attack trees. 
Moreover, it fruitfully identifies attack trees as string diagrams, 
thereby connecting them to the numerous other areas, see \cite{piedeleu2025introduction}, in which string diagrams
have been applied.
An interesting question for future research is how to effectively exploit the compositionality  
of attack tree metrics for algorithmic purposes in practice. 
Theorem \ref{thm:algorithm} is a first qualitative step in this direction.
However, more research is needed to design generic algorithms at this level that account for efficiency, and to evaluate their complexity and practical performance against specialised methods.

\bibliographystyle{alpha}
\bibliography{bibliography.bib}

\appendix

\subsection{Proof of \texorpdfstring{\Cref{thm:matrix-semantics-formula}}{the explicit formula for the stochastic matrix interpretation}}\label{sec:proof-matrix-interpretation-formula}

By the uniqueness part of \Cref{thm:term-graphs-are-free-channel-category}, 
it is sufficient to prove the following two statements.

\begin{enumerate}
    \item The family of maps
    \[
        M^{(-)}: \termgraphs{\mathsf{AT}(B)(i, j)} \to \mathsf{BoolStoch}_R(i,j),
    \]
    \[
        T \mapsto \left(M^T_{yx}\right)_{y\in 2^j, \,x\in 2^i},
    \]
    that assign to each attack tree component $T:i\to j$ the matrix $M^T$ defined in \Cref{thm:matrix-semantics-formula}, is a functor of channel categories.
    \item The desired identity,
    \begin{equation}\label{eq:sem_eq_M}
        \sem{T}{\mathcal{I}_{\mathsf{stoch}}^\alpha} = M^{T},
    \end{equation}
    holds for every $T$ of the form $T = \langle f \rangle$, where $f\in \mathsf{AT}(B)$ is any function symbol.
\end{enumerate} 
\text{}

\par \emph{Step 1):} We we need to verify that $M^{(-)}$ preserves parallel and sequential compositions, and that it preserves the basic wiring channels: the identity, copy, delete, and swap gates.

\emph{Parallel compositions:} 
Let $T_1:i\to j$, $T_2:k \to l$ be attack tree components over $B$. 
For every attack $a\in \mathsf{Attacks}(T_1 \otimes T_2)$, 
 every $(x,z) \in 2^{i+k}$ and every $(y,w) \in 2^{j+l}$, 
$S_{T_1 \otimes T_2}(a, (x,z)) = (y, w)$ if and only if 
$S_{T_1}(a \circ \mathsf{inl}, x) = y$ and $S_{T_2}(a \circ \mathsf{inr}, z) = w$, where $\mathsf{inl}: \mathsf{BAS}(T_1) \to \mathsf{BAS}(T_1 \otimes T_2)$ and $\mathsf{inl}: \mathsf{BAS}(T_2) \to \mathsf{BAS}(T_1 \otimes T_2)$ are the inclusion maps from the basic attack steps of $T_1$ and $T_2$ to the basic attack steps of their parallel compositions.
Moreover, we have that
\[
\prod_{\substack{v \in \mathsf{BAS}(T_1 \otimes T_2) \\ a(v) = i}} \!\!\!\!\!\overline{\alpha}_i(v) = \left(\prod_{\substack{v \in \mathsf{BAS}(T_1) \\ (a \,\circ\, \mathsf{inl})(v) = i}} \!\!\!\!\overline{\alpha}_i(v)\right) \cdot \left(\prod_{\substack{v \in \mathsf{BAS}(T_2) \\ (a \,\circ\, \mathsf{inr})(v) = i}} \!\!\!\!\overline{\alpha}_i(v)\right),
\]
for all $i\in \{0,1\}$.
Therefore, may split the sum in the formula defining $M^{T_1 \otimes T_2}$ and redistribute the products therein to obtain:
\begin{align*}
    M^{T_1 \otimes T_2}_{(y,x), (w, z)} = M^{T_1}_{(y,x)}\cdot M^{T_2}_{(w, z)} = M^{T_1} \otimes M^{T_2}.
\end{align*}

\emph{Sequential compositions:} 
Let $T_1:i\to j$, $T_2:j \to k$ be attack tree components over $B$.
For every attack $a\in \mathsf{Attacks}(T_2 \circ T_1)$, 
and every $x\in 2^i$, $z\in 2^k$, 
$S_{T_2 \circ T_1}(a, x)=z$ if and only if 
there exists a $y\in 2^j$ such that 
$S_{T_1}(a \circ \mathsf{inr}, x) = y$ and  $S_{T_2}(a \circ \mathsf{inl}, y)=z$,
where, similar to before, $\mathsf{inl}$ and $\mathsf{inr}$ are the inclusion maps from $T_1$ and $T_2$ to $T_2 \circ T_1$.
To simplify notation, write
\[
\Pi(a) := \left(\prod_{a(v) = 1} \overline{\alpha}_1(v)\right) \cdot 
    \left(\prod_{a(v) = 0} \overline{\alpha}_0(v)\right),
\]
for any attack tree component $T$ and attack $a\in \mathsf{Attacks}(T)$.
Similar to the case of parallel compositions, we then have
\[
\Pi(a) = \Pi(a \circ \mathsf{inl}) \cdot\Pi(a \circ \mathsf{inr}).
\]
Putting all of these observations together, we calculate
\begin{align*}
    M^{T_2\circ T_1}_{zx} 
    &= \!\sum_{S_{T_2\circ T_1}(a, x)=z} \!\!\!\Pi(a) \\
    &= \sum_{y\in 2^j} \:\sum_{S_{T_1}(a_1,\, x) = y}\,
       \sum_{S_{T_2}(a_2,\, y) = z} \!\!\!\!\!\Pi(a_1)\Pi(a_2)\\
    &= \sum_{y\in 2^j} \left(\sum_{S_{T_1}(a_1,\, x) = y}\!\!\!\!\!\Pi(a_1)\right)\!\!
    \left(\sum_{S_{T_2}(a_2,\, y) = z}\!\!\!\!\!\Pi(a_2)\right)\\
    &= \sum_{y\in 2^j} M_{yx}^{T_1} M_{zy}^{T_2} \\
    &= \left(M^{T_2} M^{T_1}\right)_{zx}.
\end{align*}

\emph{Basic wiring channels:} Observe that if $T$ does not contain any basic attack step, then
the products over basic attack steps in the definition of $M^T$ are empty, and hence all summands appearing therein are $1$. 
Moreover, when there are no basic attack steps, there is only one attack (the empty attack), 
and therefore, there is also at most one summand.
Hence, in this case, $M^T_{yx}$ is $1$ if and only if $S_T(\,!\,, x)=y$ 
(where $!: \emptyset \to \{0,1\}$ is the empty attack). 
Since none of the basic wiring channels contain any basic attack steps, 
we can use this observation to see that they satisfy \Cref{eq:sem_eq_M}
by direct comparison. \\

\emph{Step 2):} 
It remains to show that \Cref{eq:sem_eq_M} holds for all atomic attack tree components associated to the function symbols in the signature $\mathsf{AT}(B)$. 
For the gates $\AND_i$ and $\OR_i$, as they do not contain any basic attack steps, this follows using the same observation as for the basic wiring channels.
Finally, let $b\in B$ be a basic attack step label. 
Then $M^{\langle b\rangle}$ is a $2^1\times 2^0$-matrix, 
which we can identify with a vector $\left(M^{\langle b\rangle}_y\right)$ in $R^2$.
Since for each $y\in \{0,1\}$, there is only one attack $a$ that satisfies 
$S_{\langle b\rangle}(a)=y$, 
the sum defining $M^{\langle b\rangle}$ reduces to
\[
    M^{\langle b\rangle}_{y} =  \begin{pmatrix}
        \overline{\alpha}_0(v) &  \overline{\alpha}_1(v)
    \end{pmatrix}^\intercal,
\]
where $v\in \mathsf{BAS}(\langle b\rangle)$ is the unique basic attack step in $\langle b\rangle$. 
Using that $\overline{\alpha}_i(v) = \alpha_i(\labelfun^T(v)) = \alpha_i(b)$ ($i\in \{0,1\}$), we obtain
\[
    M^{\langle b\rangle}_{y} =  \begin{pmatrix}
        \alpha_0(b) &  \alpha_1(b)
    \end{pmatrix}^\intercal = \sem{\langle b\rangle}{\mathcal{I}_{\mathsf{stoch}}^\alpha},
\]  
thus completing the proof. \qed

\subsection{Proof of \texorpdfstring{\Cref{thm:min_suc_semantics}}{the theorem on minimal successful attacks}}\label{sec:proof-min-suc-semantics}

By \Cref{cor:prop-semring-metrics-formula}, 
\begin{equation}\label{eq:big-union-min-suc}
\left(\sem{T}{\mathcal{I}_{\mathsf{prop}}^\alpha}\right)_1 = \left(\bigcup_{a \in \mathsf{MinSuc}_T} \prod_{a(b) = 1} \overline{\alpha}(b)\right)^{\mathsf{AC}},
\end{equation}
where the product is taken in the semiring $\mathsf{AC}(T)$. 
As before, we identify attacks $a\in \mathsf{MinSuc}_T$ with subsets of $B$, sometimes writing $b\in a$ instead of $a(b)=1$.
Then, using the definition of the product in $\mathsf{AC}(T)$, 
we obtain:
\begin{align*}
\prod_{a(b)=1}\overline{\alpha}(b)
&= \left\{
   \bigcup_{b\in a} \tilde{a}_b \;\Big|\;
   \forall b\in a:\;
   \tilde{a}_b \in \overline{\alpha}(b)
   \right\}^{\mathsf{AC}} \\[4pt]
&= \left\{
   \bigcup_{b\in a} \tilde{a}_b \;\Big|\;
   \forall b\in a:\;
   \tilde{a}_b \in \{\{b\}\}
   \right\}^{\mathsf{AC}} \\[4pt]
&= \left\{
   \bigcup_{b\in a} \{b\}
   \right\}^{\mathsf{AC}}
   = \{a\}^{\mathsf{AC}}
   = \{a\}.
\end{align*}
Therefore, the union over minimal successful attacks on the right-hand side of \Cref{eq:big-union-min-suc} is a union of singleton sets, giving, 
\[
\left(\sem{T}{\mathcal{I}_{\mathsf{prop}}^\alpha}\right)_1 = \left(\mathsf{MinSuc}_T\right)^{\mathsf{AC}} = \mathsf{MinSuc}_T,
\]
where the final equality follows because $\mathsf{MinSuc}_T$ is already an antichain. \qed 

\end{document}